\theoremstyle{plain}
\newtheorem*{theorem*}{Theorem}
\newtheorem{prop}{Proposition}
\theoremstyle{definition}
\newtheorem{definition}{Definition}
\theoremstyle{remark}
\newtheorem*{remark}{Remark}
\newcommand{\testleft}{\leftarrow\!\shortmid}
\newcommand{\tend}[1]{\hbox{\oalign{${#1}$\crcr\hidewidth$\scriptscriptstyle\bm{\sim}$\hidewidth}}}
\providecommand{\LyX}{L\kern-.1667em\lower.25em\hbox{Y}\kern-.125emX\@}
\newfont{\Bb}{msbm10}
\newfont{\Sc}{eusm10}
\newfont{\Fr}{eufm10}
\makeatother \setcounter{MaxMatrixCols}{10}
\DeclareMathOperator{\End}{End}
\newcommand{\ed}{\end{document}}
\newcommand{\isom}{\cong}
\begin{document}

\title{Clifford Algebraic approach to the De Donder-Weyl Hamiltonian theory}

\author{ M.C.B. Fernandes \\ Instituto de F\'{\i}sica, Universidade de Bras\'{\i}lia 70910-900,\\
N\'{u}cleo de Relatividade e Teoria de Part\'{i}culas \\
 Bras\'{\i}lia, DF, Brazil \\ (e-mail: mcezar@unb.br)}

\maketitle

\begin{abstract}
The Clifford algebraic formulation of the Duffin-Kemmer-Petiau (DKP) algebras is applied to recast the De Donder-Weyl Hamiltonian (DWH) theory  as an algebraic
description independent of the  matrix representation of the DKP algebra. We show that the DWH equations for antisymmetric fields arise out of the action of the DKP
algebra on certain invariant subspaces of the Clifford algebra which carry the representations of the fields. The matrix representation-free formula for the bracket associated with the DKP form of the DWH equations is also derived. This bracket satisfies a generalization of the standard properties of the Poisson bracket.
\end{abstract}

\maketitle
\noindent{\bf MSC codes:} Primary {\bf 15A66, 15A69, 15A75}.

\noindent{\bf Keywords:} De Donder-Weyl Hamiltonian Theory, Clifford Algebra, Duffin-Kemmer-Petiau Algebra.

\section{Introduction}

In the year 2000, a matrix formulation of the De Donder-Weyl Hamiltonian field equations using the matrix form of the $\beta$ generators of the  DKP algebra  was established \cite{Igor1}. It was then observed that the $\beta$-matrix generators take part in the DWH field equations as the analogues of the symplectic matrix of particle mechanics. This result has linked the DWH theory \cite{Cara, Weyl, DeDonder} to the DKP matrix algebra \cite{Petiau,Duffin,Kemmer1,Kemmer2,Corson} thus pointing to a new research front related to the algebraic structure underlying the DWH field equations.

The DWH theory has been an active field of research in mathematical physics with a wide range of results.  The distinguished appeal of this theory as a theory that consists of a Hamiltonian system of \emph{covariant}  equations makes it especially attractive from the point of view of physics \cite{Vey, 1912.13363,1911.05962,0208036}, notably in the contexts of quantization of general relativity \cite{gr-qc/9810076,gr-qc/0004066,gr-qc/0012074}, quantization of field theory in curved space-time \cite{11Symmetry}, quantum Yang-Mills theory \cite{arXiv:1805.o5279,1706.01766} and quantum gravity \cite{Igor7}.

The development of the DWH theory has a long history that we will not review in this article. For a list of references, the reader could consult
\cite{Tulc,Sny,Gaw,Kij,KijW,KijT,Corn,Gia,Igor2,Gotay,Kastrup}. Recently, the DKP form of the DWH equations for multicomponent fields was applied in  \cite{Pietrzyk} to write Einstein's equations in the study of the polysymplectic integrator in numerical general relativity.

The aim of the present work is to contribute to the understanding of the relationship between the DKP algebra and the DWH equations.

The Clifford algebra will herein be used to define the DKP algebra as a metric subalgebra (the notion to be explained below). This way of presenting the DKP algebra was first introduced in \cite{Schom} and it was later  applied to the study of concrete problems in the phase space picture of quantum mechanics \cite{M.Fernandes,M.Fernandes1,M.Fernandes2}. The relationship between  the DKP algebra and the Clifford algebra has been studied also in \cite{Sanch,Shim} but from  different points of view.

Our main result is stated through a theorem which presents the derivation of the DWH field equations from the action of the DKP algebra on subspaces of the original Clifford algebra. This formulation is similar to that of the algebraic spinors \cite{Riesz,Cartan} in which the first-order Dirac operator acts on the minimal left/right ideals of the Clifford algebra \cite{Schom,Hes,BenTucker}. Minimal ideals of an algebra are representation spaces for irreducible representations of the algebra on itself. Let us emphasize, however, that the results reported here refer to the DKP algebra and the classical DWH theory.

The sequence of the presentation is as follows. In Section 2, the Clifford algebra is defined in terms of a pair of Lagrangian subspaces\footnote{See footnote{}$^3$.} of a vector space endowed with a split bilinear form. This is the starting point to access the decomposition of the algebra into its subalgebras. This procedure leads to the framework of projection operators whose algebraic properties will be used in most calculations in the paper. The decomposition of the algebra by means of those projectors has been applied already in \cite{Bergdolt1} in order to recover the representation of Cartan's spin matrices and to the definition of spinors. The same technique was further applied in \cite{Bergdolt2} to the complete reduction of Clifford algebras.

In Section~3, a basic idempotent in the algebra is selected by means of automorphisms of the algebra. This will bring new insights into its invariance properties.

Section~4 introduces the ring of endomorphisms of the subspaces built from the basic invariant idempotent. This is then extended to multilinear endomorphisms in the Appendix following closely the original presentation in \cite{Schom}.  A new notation feature is also developed in Section 4. The subsequent use of this notation will serve two purposes: to make explicit how the algebra acts on its projected subspaces and/or ideals, and to provide a consistent system for calculations.

In Section~5, we introduce an additional structure into the algebra by equipping the original Langrangian subspaces with a metric tensor. This extra structure allows the construction of metric sub-algebras in the  Clifford algebra. The DKP algebra is introduced in this section in Proposition 1.

Thereafter  the action of the DKP algebra on a distinct class of subspaces of the Clifford algebra is set forth in  Proposition~2 of Section~6. Such an action encodes the \textit{linear} $k$-symplectic \cite{Igor1,Awa,MLeon} structure in the DKP formulation of the DWH theory in matrix representation-free form.

In Section~7, we globalize the algebras by means of an  algebra bundle description of the DKP algebra. This step allows us to construct a ``prototype'' for the DWH equations which is the first step towards the DWH theory.

In Section~8, the main result of the paper is stated in the theorem  which formulates the DWH field equations for antisymmetric fields in matrix representation-free form.

In Section~9, the construction of the  bracket consistent with the DKP algebraic representation of the DWH theory is developed in details. It is demonstrated that the matrix representation-free form of the bracket satisfy certain properties which generalize the properties of the standard Poisson bracket.

We finish the paper in Section 10 with the conclusions and the outline of further developments.

\section{Clifford Algebra of a Split Bilinear Form}
The Clifford algebra of the quadratic form $\mathcal{Q}$ on a vector space $W$ over a field $\mathbb{F}$ \footnote{In the applications considered here $\mathbb{F}$ can be assumed to be $\mathbb{R}$ or $\mathbb{C}$.}, char $\mathbb{F}\neq 2$,   is an associative algebra $\mathcal{C}$ with
$\mathbf{1}$ over $\mathbb{F}$, together with a linear injection
\begin{align*}
I:W \hookrightarrow \mathcal{C}
\end{align*}
such that
\begin{itemize}
  \item[(i)] for each $w$ in $W$, $w^2=\mathcal{Q}(w)\mathbf{1}$;
  \item[(ii)] the pair $\{\mathcal{C},I\}$ is universal, that is, given another pair $\{\mathcal{C}', W\stackrel{I'}{\longrightarrow}
      \mathcal{C}'\}$ which satisfies the condition (i), there is a unique algebra homomorphism $\psi$ from $\mathcal{C}$ to $\mathcal{C'}$ such that
      $I'=\psi\circ I.$
  \end{itemize}

Let $W$ be even-dimensional. A bilinear form $\langle \cdot, \cdot\rangle_W$ on $W$ is called \textit{split} if its \textit{Witt index}\footnote{The \textit{Witt index} of a non-degenerate symmetric bilinear form on a vector space $W$ is the dimension of a maximal totally isotropic subspace of $W$ relative to the form
$\langle \cdot, \cdot\rangle_W$ Maximal totally isotropic subspaces are also called \emph{Lagrangian subspaces} \cite{Lam, Mein}.} is $\frac{1}{2} \dim W$.
The Clifford algebra of a split bilinear form is described as follows. Let $V$ be a $n$-dimensional vector space and  $V^*$ its dual relative to the
canonical pairing $V\times V^*\stackrel{\langle \cdot, \cdot\rangle}\longrightarrow \mathbb{F}$, $(v,\alpha)\mapsto \langle \alpha,v\rangle$.  This pairing is
$\mathrm{GL}(n)$ invariant. We set $W$ as the direct sum $W=V\oplus V^{*}$. By using the $\mathrm{GL}(n)$ invariant $\langle \cdot,\cdot\rangle$ a split
non-degenerate symmetric bilinear form $\langle \cdot, \cdot \rangle_W: W\times W\rightarrow \mathbb{F}$ on $W$\footnote{A  \textit{split} bilinear form on $W$ implies that $W$ is hyperbolic \cite{Lam}.} can be defined as follows:
\begin{equation}
\langle w,w'\rangle_W:=\frac{1}{2}(\langle \alpha,v'\rangle + \langle \alpha',v\rangle), \label{bf1}
\end{equation}
where $w=v+\alpha$ and $w'=v'+ \alpha'$, with $v, v' \in V$ and $\alpha, \alpha' \in V^{*}$.
It is easily seen that the $\mathrm{GL}(n)$ invariant $\langle \cdot,\cdot\rangle $ turns out to be the quadratic form $\mathcal{Q}_{W}(w):=\langle w,w\rangle_W$ associated to this bilinear form. In other words, for $w=v+\alpha \in W$, $(w)^2=\mathcal{Q}_{W}(w)= \langle \alpha,v\rangle $ and
$2\langle w,w'\rangle_W=\mathcal{Q}_{W}(w+w')-\mathcal{Q}_{W}(w)-\mathcal{Q}_{W}(w')$. Note also that the subspaces $V$ and $V^*$ of $W$ are maximal totally isotropic (Lagrangian) because $\mathcal{Q}_{W}(v)=\mathcal{Q}_{W}(\alpha)=0$ for all $v \in V\hookrightarrow W$ and $\alpha \in V^* \hookrightarrow W$.

The Clifford algebra $\mathcal{C}(W,\mathcal{Q}_{W})$ of $(W,\mathcal{Q}_{W})$ can now be defined by setting the map  $I:W \hookrightarrow \mathcal{C}$ by the following relations:
 \begin{align}
v\alpha+\alpha v&=2\langle \alpha,v\rangle_W \cdot \mathbf{1}=\langle \alpha,v\rangle \!\cdot\!\mathbf{1}, \label{gn1}\\
vv'+v'v&=0, \label{gn2} \\
\alpha\alpha'+\alpha'\alpha&=0, \label{gn3}
\end{align}
for any choice of $v, v' \in V\hookrightarrow W$ and $\alpha, \alpha' \in V^*\hookrightarrow W$.

The set of relations (\ref{gn1})--(\ref{gn3}) is just another way of writing the traditional Clifford algebra relation
\begin{equation}
ww'+w'w=2\langle w,w'\rangle_W \!\cdot \!\mathbf{1}\label{saca}
\end{equation}
 associated to the bilinear form $\langle \cdot,\cdot\rangle_W$. This can be shown by computing the algebra product using the \emph{ordered pair} notation
 $(\cdot,\cdot)$ for the direct sum with the additional feature that the vectors are considered to be in the Clifford algebra. So $v\equiv(v, 0)$,
 $v'\equiv(v', 0)$, $\alpha\equiv(0, \alpha)$ and $\alpha'\equiv(0, \alpha')$ are elements of the Clifford algebra coming from $V$ and $V^*$
 respectively, image of the injection $I:W\rightarrow \mathcal{C}(W,\mathcal{Q}_{W})$ of the inclusions $i_1:V\rightarrow W$ and $i_2:V^*\rightarrow W$, and $0$ means the zero vector.  Thus the Clifford algebra relation (\ref{saca}) associated to the bilinear form (\ref{bf1}) implies
\begin{align*}
(v, 0)(v', 0)+(v', 0)(v, 0)=\langle 0,v'\rangle + \langle 0,v\rangle=0,  \\
(0, \alpha)(0, \alpha')+(0, \alpha')(0, \alpha)=\langle \alpha,0\rangle + \langle \alpha',0\rangle=0  \\
\mathrm{and} \;\;(v, 0)(0, \alpha)+(0, \alpha)(v, 0)=\langle \alpha,v\rangle,
 \end{align*}
 which are just the relations (\ref{gn1})--(\ref{gn3}).  Throughout this paper and for practical reasons we shall use notation (\ref{gn1})--(\ref{gn3}) in
 all calculations. So $vv'$ actually means $(v,0)(v',0)$, $v\alpha$ means $(v,0)(0,\alpha)$, $\alpha v$  means $(0,\alpha)(v,0)$ and
 $\alpha\alpha'$ means $(0,\alpha)(0,\alpha')$, for $v,v' \in V$ and $\alpha,\alpha' \in V^{*}$. This will bring much simplification in future
 calculations.

 The Clifford algebra $\mathcal{C}(W,\mathcal{Q}_{W})$ was called
 $\mathrm{G}_n$ in \cite{Schom} and here we will continue to adopt the same denomination.

\section{Invariant Projectors}
Let $\mathbf{T}$ be an element of $\mathrm{GL}(n)$, the general linear group of transformations on $V$. We consider the following automorphism
\begin{equation}
\Omega_{\mathbf{T}}(v) = \mathbf{T}v\quad \text{for all}\quad v\in V, \nonumber
\end{equation}
of the algebra $\mathrm{G}_n$.
It is easy to see that this automorphism $\label{aut1}$ preserves relations (\ref{gn1})--(\ref{gn3}) and hence extends to an automorphism of the whole
$\mathrm{G}_n$.
Among the elements $\mathbf{T}$ of the group we consider the scalings $c\mathbf{I}$ for $c$ a constant in $\mathbb{F}$. They give rise to dilation
automorphisms
$\omega_c:=\Omega_{c\mathbf{I}}$. They determine the identities
\begin{equation}
\omega_c{(v)}=cv \quad \mathrm{and} \quad \omega_c{(\alpha)}=c^{-1}\alpha \nonumber
\end{equation}
for all $v \in V$ and $\alpha \in  V^*$. Notice that if we choose the constants $c$ as pure imaginary, the $c\mathbf{I}$ become unitary automorphisms belonging to the circle group $\mathbb{T}$.

We are interested in the subalgebra of $\mathrm{G}_n$ which is invariant for all (dilation) automorphisms. Its definition is
\begin{equation}
\mathrm{G}_{n}^0:=\{\Lambda \in \mathrm{G}_n:\omega_{c}{\Lambda}=\Lambda \quad \text{for all} \quad c\in\mathbb{F}\}. \nonumber
\end{equation}
Let $v_1,\ldots,v_{m}$ be vectors in $V$ and $\alpha_1,\ldots,\alpha_{n}$ vectors in $V^*$. An element
\begin{equation}
\Gamma:=\alpha_1\cdots\alpha_{n}v_1\cdots v_{m} \nonumber
\end{equation}
in $\mathrm{G}_n$ satisfies $\omega_c(\Gamma)=c^{\,m-n}\Gamma$. In particular, $\Gamma$ belongs to the dilation (or unitary) invariant subalgebra
$\mathrm{G}_{n}^0$ when $m=n$. These particular elements will play a fundamental role in $\mathrm{G}_{n}$ and the invariant subalgebra $\mathrm{G}_{n}^0$ is
spanned by elements of such type.

For pairs of vectors $w=(v,0)$ and $w'=(0,\alpha)$ in $W$ we can build the following invariant projector:
\begin{equation}
\pi_W:=\frac{w'w}{\langle w',w\rangle_W}=\frac{(0,\alpha)(v,0)}{\langle \alpha,v\rangle_W} \equiv \frac{\alpha v}{\langle \alpha, v\rangle}.
\nonumber
\end{equation}
That this is a projection we can readily  verify:
\begin{align*}
\pi^2_W&=\frac{\alpha v}{\langle \alpha, v\rangle}\frac{\alpha v}{\langle \alpha, v\rangle}  \\
&=\frac{\langle \mathbf{\alpha},v \rangle \mathbf{\alpha}v}{\langle \alpha, v\rangle \langle \alpha, v\rangle}  \\
&= \pi_W.
\end{align*}
This idempotent element of the algebra is associated to a non-isotropic direction of the space $W$.

In all formulas referring to vector, covectors, and tensors in general, we shall use the kernel letters like $v$, $\alpha$, etc. and superscript or subscript running indices like $i,j,k,l,\ldots$. The convention to be adopted in this article is that Roman \emph{kernel} letters will always refer to contravariant tensor objects and Greek letters to covariant tensor objects. The superscript or subscript running indices are used in coherence with the standard principles of Ricci calculus where the positioning of the indices is according to the behavior under transformation. The summation convention on repeated indices is also implicit unless otherwise stated. We shall  warn the reader whenever new notation is introduced.

Let $e_1,\ldots , e_n$ be a basis of $V$ and $e^1,\ldots , e^n$ the dual basis in $V^*$. For each $j=1,\ldots , \dim V$ we introduce the following projectors:
\begin{align*}
{\mathcal{N}}_{j}=e_je^j \quad \text{(no summation).}
\end{align*}
 They give rise to the idempotent
\begin{equation}
P:= {\mathcal{N}}_{1}\cdot\cdot\cdot{\mathcal{N}}_{n}=P^2. \nonumber
\end{equation}
This idempotent is fundamental to what follows.
It satisfies the fundamental relations:
\begin{equation}
P(0,\rho)\equiv P\rho=0\quad \mathrm{and} \quad (v,0)P\equiv vP=0,\label{fr1}
\end{equation}
where $v=v^ie_i\in V$ and $\rho=\nu_je^j\in V^*$. It follows from the relations (\ref{fr1}) that contravariant tensors are left
zero divisors of the idempotent $P$ and covariant tensors are right zero divisors of $P$. This fact will be used systematically in most  forthcoming
calculations.

\section{Endomorphisms in $\mathrm{G}_n$ and Further Developments in the Notation}
 The algebraic calculus that will be used in future calculations will be developed in this and the next section. We start with the properties of the elements of
 the form $Pv$
 and $\beta P$ which we denote by

\begin{equation}
P_{v}:= Pv=P(v,0) \quad \text{and}\quad {}^{\beta}\!P:=\beta P=(0,\beta)P. \label{d}
\end{equation}
They satisfy the relations
\begin{gather}
P_{v}P =0,  \quad  P(^{\beta}\!P)=0, \quad    ^{\beta}\!P(^{\alpha}\!P)=0,  \quad  P_{v}P_{w}=0, \label{a00}\\
P_{v}(^{\beta}\!P) =\langle \beta,v\rangle(P), \quad ^{\beta}\!PP_{v} = \beta Pv \equiv ^{\beta}\!P_{v}, \label{d1a}
\end{gather}
which follow from using relation (\ref{gn1}) along with relations (\ref{fr1}). Hence, for the set of basis elements introduced at the end of the last section we have
\begin{align*}
P_{i} ({}^{j}\!P)\equiv P_{\!e_i} ({}^{e^j}\!\!P)&=\delta^j_iP ,  \\
{}^{e^j}\!\!PP_{\!e_i}&=e^jPe_i\equiv{}^j\!P_i.
\end{align*}
 From these algebraic properties follows
 \begin{equation}
 {}^j\!P_i({}^k\!P_l)=\delta^k_i\;{}^j\!P_l, \nonumber
 \end{equation}
 which shows that the ${}^j\!P_i$ form a set of linearly independent elements. These elements of $\mathrm{G}_n$ are operators acting from the left on
 objects of the form ${}^{\alpha}\!P$ and from the right on $P_{v}$, the action being simply the algebra multiplication. We denote
 the space of elements ${}^{\alpha}\!P$ and $P_{v}$ by $S^*$ and $S$, respectively.

  The projector $\Pi_{1}:=\sum_{i=1}^n{}^{i}\!P_{i}$ plays the role of
  the unit operator (the canonical unit tensor). We adopt the notation $\Pi_{0}\equiv P$. It follows that $\Pi_{0}\Pi_{1}=\Pi_{1}\Pi_{0}=0$.
Summing up, to every ordered pair $(\sigma,{v})$ in $V^{*}\times V$ there is a one-to-one correspondence with the elements ${}^{\sigma}\!P_{v}$ in
$\mathrm{G}_n$ and hence with the linear endomorphisms
\begin{gather*}
\End S^{\ast} \leftarrow V^{*} \times V \rightarrow \End S, \\
{}^{\sigma}\!P_{\!v}\testleft (\sigma,{v}) \mapsto {}^{\sigma}\!P_{\!v}, \\
\;\;\text{left action} \;\phantom{(\sigma,{v})}\; \text{right action}.
\end{gather*}
For example, we compute the left action on $S^*$,
\begin{align*}
 {}^{\sigma}\!P_{\!v}({}^{\alpha}\!P)&=\sigma P v\alpha P \nonumber \\
 &=\sigma P[\langle\alpha,v\rangle \cdot \mathbf{1}-\alpha v]P  \\
 &=\langle\alpha,v\rangle \; {}^{\sigma}\!P \cdot\mathbf{1}
 \end{align*}
 where we have used relations (\ref{gn1}) and (\ref{fr1}). The right action follows easily by analogous steps.

 These maps embed $V^{*}\times V$ into a matrix algebra of operators on the spaces $S^{*}$ and $S$. The algebraic product becomes the matrix product. The
 extension of  these algebraic properties is detailed in the Appendix.

\section{Metric Structure of the \text{DKP} Algebra}
\subsection{Additional Structure from a Metric on $V$}
 The introduction of a metric on $V$ corresponds to specifying the isomorphism $V \isom V^*$. Let  $g(\cdot|\cdot)_{V}$ be a symmetric non degenerate metric on the space~$V$ taking values in $\mathbb{F}$. The metric $g(\cdot|\cdot)_{V}$ gives rise to the flat and sharp isomorphisms
 \begin{align*}
 \flat:V&\rightarrow V^*,  \\
 v&\mapsto {}^\flat v = g(v|\cdot)_{V}:=\widetilde{v}, \quad v\in V,
 \end{align*}
 and $\sharp=\flat^{-1}$.

 The covector $\widetilde{v}$ in $V^*$ denotes the linear functional whose value at any vector $w \in V$ is
 \begin{equation}
 \widetilde{v}(w)=\langle \widetilde{v}, w\rangle=g(v,w)_{V}, \nonumber
 \end{equation}
where $\langle \cdot\,, \cdot\rangle$ continues to denote  the canonical pairing of vectors and covectors.

When the basis $\{e_i\}$ and its dual $\{e^i\}$ are being used, we write
$$
g(e_i,e_j):=g_{ij}
$$
as usual. So the metric $g$ has the expression $g=\sum_{ij}g_{ij}e^i\otimes e^j$. This symmetric tensor induces the isomorphism $\flat$ of the space $V$ with the $V^*$ so that each vector $v \in V$
goes into a linear function $g(v,\cdot)$ consisting of the scalar product with $v$.  In the case of $v\equiv e_i$ the map is given by
\begin{equation}
 e_k \mapsto \widetilde{e}_k=g_{ik}e^i \nonumber.
 \end{equation}
Moreover, this isomorphism induces a scalar product on the dual space $V^*$ and so a tensor of type ${2 \choose 0}$ which we will denote by $g^{-1}$. We write
$$
g^{-1}(e^i,e^j):=g^{ij}
$$
so that $g^{-1}=\sum_{ij}g^{ij}e_i\otimes e_j$. This metric induces a map into $V$:
\begin{equation}
 e^k\mapsto\tend{e}^k=g^{ik}e_i \nonumber.
 \end{equation}
The two maps --from $V$ to $V^*$ and vice versa-- are inverses of one another so that the operations of raising and lowering a given index are inverse
of each other,
\begin{equation}
g^{ij}g_{jk}= \delta^{i}_{k}.\nonumber
 \end{equation}
The ``matrices" $(g_{ij})$ and $(g^{ij})$ are inverses.

\subsection{DKP Algebras}

    To bring the metric into the algebra we build $(P_v)$ for $v=\tend{\alpha}$  that is
 \begin{equation}
 P_{\tend{\alpha}}:=P(\tend{\alpha},0)\equiv P\tend{\alpha}. \nonumber
 \end{equation}
Analogous relations to (\ref{d1a}) are found but with the inverse metric:
 \begin{gather}
P_{\tend{\alpha}}({}^{\beta}\!P)=g^{-1}(\alpha,\beta)P, \qquad  {}^{\sigma}\!P  P_{\tend{\alpha}}=\sigma P \tend{\alpha}. \label{d11}
\end{gather}
Clearly in the basis $\{e_i\}$, $\{e^i\}$ of $V$ and $V^*$ these relations are
 \begin{equation}
P_{\tend{e}^i}({}^{e^j}\!\!P)=P_{\tend{i}}({}^j\!P)=g^{ij}P, \qquad {}^{e^i}\!\!P  P_{\tend{e}^j}={}^i\!P_{\tend{j}}=g^{jk}({}^i\!P_k).
\label{d11b}
\end{equation}
 The DKP metric algebra can now be described:
\begin{prop}

 The set of elements
\begin{equation}
\mathbf{b}^{\alpha}= {}^{\alpha}\!P+P_{\tend{\alpha}} \nonumber
\end{equation}
in $\mathrm{G}_{n}$ generates a DKP-algebra.

 \begin{proof}In order to prove this proposition, we use relations (\ref{d11}), (\ref{a00}) and (\ref{d1a}) to obtain the product relation of the
generators of the DKP algebra but with the inverse metric. That is,
\begin{equation}
\mathbf{b}^{\alpha_{1}}\mathbf{b}^{\alpha_{2}}\mathbf{b}^{\alpha_{3}}+\mathbf{b}^{\alpha_{3}}\mathbf{b}^{\alpha_{2}}\mathbf{b}^{\alpha_{1}}=g^{-1}(\alpha_1,\alpha_2)
\mathbf{b}^{\alpha_{3}}+g^{-1}(\alpha_3,\alpha_2)\mathbf{b}^{\alpha_{1}}. \nonumber
\end{equation}
\end{proof}
\end{prop}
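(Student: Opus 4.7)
The plan is to decompose each generator into its two pieces $A^{\alpha}:=({}^{\alpha}\!P)=(\alpha)(P)$ and $B^{\alpha}:=(P_{\tend{\alpha}})=(P)(\tend{\alpha})$, so that $\mathbf{b}^{\alpha}=A^{\alpha}+B^{\alpha}$, and then reduce every binary and ternary product of the $\mathbf{b}$'s to the basic relations collected in (\ref{a00}), (\ref{d1a}) and (\ref{d11}). The key observation is that most of the $2^{3}=8$ terms in the triple product $\mathbf{b}^{\alpha_{1}}\mathbf{b}^{\alpha_{2}}\mathbf{b}^{\alpha_{3}}$ vanish by the zero-divisor identities (\ref{fr1}), so only a handful of terms survive.

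First I would handle the pairwise products. The identities $(P)(\rho)=0$ and $(v)(P)=0$ (relations (\ref{fr1})) immediately give $A^{\alpha_{1}}A^{\alpha_{2}}=0$ and $B^{\alpha_{1}}B^{\alpha_{2}}=0$, and relation (\ref{d11}) yields
\begin{equation*}
B^{\alpha_{1}}A^{\alpha_{2}}=(P_{\tend{\alpha_{1}}})({}^{\alpha_{2}}\!P)=g^{-1}(\alpha_{1},\alpha_{2})(P),
\qquad A^{\alpha_{1}}B^{\alpha_{2}}=(\alpha_{1})(P)(\tend{\alpha_{2}}).
\end{equation*}
Therefore $\mathbf{b}^{\alpha_{1}}\mathbf{b}^{\alpha_{2}}=(\alpha_{1})(P)(\tend{\alpha_{2}})+g^{-1}(\alpha_{1},\alpha_{2})(P)$.

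Next I would right-multiply this expression by $\mathbf{b}^{\alpha_{3}}=A^{\alpha_{3}}+B^{\alpha_{3}}$ and reduce each of the four resulting products. Two of them die outright: $(\alpha_{1})(P)(\tend{\alpha_{2}})\,B^{\alpha_{3}}=0$ because $(\tend{\alpha_{2}})(P)=0$, and $g^{-1}(\alpha_{1},\alpha_{2})(P)\,A^{\alpha_{3}}=0$ because $(P)(\alpha_{3})=0$. The two surviving terms are evaluated using (\ref{gn1}): in $(\alpha_{1})(P)(\tend{\alpha_{2}})(\alpha_{3})(P)$ I rewrite $(\tend{\alpha_{2}})(\alpha_{3})=g^{-1}(\alpha_{2},\alpha_{3})\mathbf{1}-(\alpha_{3})(\tend{\alpha_{2}})$, and the second piece is killed by sandwiching $(P)(\alpha_{3})=0$, leaving $g^{-1}(\alpha_{2},\alpha_{3})A^{\alpha_{1}}$; similarly $g^{-1}(\alpha_{1},\alpha_{2})(P)\,B^{\alpha_{3}}=g^{-1}(\alpha_{1},\alpha_{2})B^{\alpha_{3}}$ using $(P)^{2}=(P)$. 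Thus
\begin{equation*}
\mathbf{b}^{\alpha_{1}}\mathbf{b}^{\alpha_{2}}\mathbf{b}^{\alpha_{3}}=g^{-1}(\alpha_{2},\alpha_{3})\,A^{\alpha_{1}}+g^{-1}(\alpha_{1},\alpha_{2})\,B^{\alpha_{3}}.
\end{equation*}

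The final step is to add to this the analogous expression obtained by swapping $\alpha_{1}\leftrightarrow\alpha_{3}$. The symmetry of $g^{-1}$ then recombines the $A$ and $B$ pieces back into full $\mathbf{b}$'s:
\begin{equation*}
\mathbf{b}^{\alpha_{1}}\mathbf{b}^{\alpha_{2}}\mathbf{b}^{\alpha_{3}}+\mathbf{b}^{\alpha_{3}}\mathbf{b}^{\alpha_{2}}\mathbf{b}^{\alpha_{1}}
=g^{-1}(\alpha_{2},\alpha_{3})(A^{\alpha_{1}}+B^{\alpha_{1}})+g^{-1}(\alpha_{1},\alpha_{2})(A^{\alpha_{3}}+B^{\alpha_{3}}),
\end{equation*}
which is the desired DKP relation. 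There is no real obstacle here; the only care required is the bookkeeping that keeps track of which of the many terms vanish by the left/right zero-divisor property of $(P)$, and ensuring that the one Clifford anticommutation step (applied to $(\tend{\alpha_{2}})(\alpha_{3})$) is invoked on a factor that is not immediately annihilated by an adjacent $(P)$.
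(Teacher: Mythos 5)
Your proposal is correct and follows essentially the same route as the paper: the paper's proof simply invokes relations (\ref{d11}), (\ref{a00}) and (\ref{d1a}), and your computation is exactly the detailed expansion of the triple product that those relations encode, with the surviving terms recombining via the symmetry of $g^{-1}$ into the stated DKP relation. The only cosmetic difference is that you occasionally fall back on the primitive identities (\ref{fr1}) and (\ref{gn1}) instead of the packaged projector relations, which amounts to re-deriving them in place.
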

Note that, using the \emph{negative sign in front of the inverse metric}\footnote{The negative sign in front of the metric is sometimes used in the case of
the Clifford algebra relations in order to simplify the correspondence with quaternions: $vw+wv=-2g(v,w)$, with $g(v,w)$ the Euclidean metric.} we can
also define
 \begin{equation}
 \underline{\mathbf{b}}{}^{\alpha}:={}^{\alpha}\!P-P_{\tend{\alpha}}, \label{nmdkp}
 \end{equation}
 from which follows
 \begin{equation}
\underline{\mathbf{b}}{}^{\alpha_1}\underline{\mathbf{b}}{}^{\alpha_2}\underline{\mathbf{b}}{}^{\alpha_3}+\underline{\mathbf{b}}{}^{\alpha_3}\underline{\mathbf{b}}{}^{\alpha_2}\underline{\mathbf{b}}{}^{\alpha_1}=-g^{-1}(\alpha_1,\alpha_2)\underline{\mathbf{b}}{}^{\alpha_3}
-g^{-1}(\alpha_3,\alpha_2)\underline{\mathbf{b}}{}^{\alpha_1}. \nonumber
\end{equation}
There is also the reciprocal version that we write in the form,
\begin{equation}
\underline{\mathbf{b}}{}_{v}:=P_v-{}^{\widetilde{v}}\!P. \label{dkds}
\end{equation}
They satisfy the relations,
\begin{equation}
\underline{\mathbf{b}}{}_{v_1}\underline{\mathbf{b}}{}_{v_2}\underline{\mathbf{b}}{}_{v_3}+\underline{\mathbf{b}}{}_{v_3}\underline{\mathbf{b}}{}_{v_2}\underline{\mathbf{b}}{}_{v_1}=-g(v_1,v_2)\underline{\mathbf{b}}{}_{v_3}
-g(v_3,v_2)\underline{\mathbf{b}}{}_{v_1}. \nonumber
\end{equation}

The most common form found in the literature \cite{Corson} for the DKP generators can also be rewritten in this algebraic setting, they are
\begin{equation}
\beta_i:= P_i+{}^{\widetilde{i}}\!P,\label{bddk}
\end{equation}
which satisfy
\begin{equation}
\beta_i\beta_j\beta_k+\beta_k\beta_j\beta_i=g_{ij}\beta_k+g_{kj}\beta_i. \nonumber
\end{equation}

The identity element of the algebra is the idempotent
\begin{equation}
1_{\mathrm{DKP}}=P+\sum_{i=1}^n {}^i\!P_i. \nonumber
\end{equation}
The generators (\ref{bddk}) can also be written with the negative sign in front of the metric, that is,
\begin{equation}
\underline{\beta}{}_i:= Pe_i- g_{ij}e^jP,\label{ndkp}
\end{equation}
which leads to
\begin{equation}
\underline{\beta}{}_i\underline{\beta}{}_j\underline{\beta}{}_k+\underline{\beta}{}_k\underline{\beta}{}_j\underline{\beta}{}_i=-g_{ij}\underline{\beta}{}_k-g_{kj}\underline{\beta}{}_i.
\nonumber
\end{equation}
The form (\ref{ndkp}) of the DKP generators  was already used in \cite{M.Fernandes} in connection with the quantum Liouville equation in phase space.

\section{Representation on Invariant Subspaces}
In what follows we will name each invariant subspace using the  index ``$p$" defined in the Appendix. This will be better explained with the development of this section.
\subsection{Projecting onto Antisymmetric Tensors}
 Let us use the projections $\Pi_p$ defined in Eqs. (\ref{pro}), (\ref{gips}) of the Appendix to project $\mathrm{G}_{n}$ as follows:
\begin{equation}
\mathcal{Z}_{(p)}:=\Pi_{0}\mathrm{G}_{n}\Pi_p+\Pi_{1}\mathrm{G}_{n}\Pi_0\mathrm{G}_{n}\Pi_{p}. \nonumber
\end{equation}
Elements of this space have the general form
 \begin{equation}
 Z_{(p)}= P_{\!\!\!\underbrace{\rule{0in}{0ex}^{\bullet \ldots \bullet}}_{p\;\mathrm{vectors}}} + {}^{\bullet}\!P_{\!\!\underbrace{\rule{0in}{0ex}^{\bullet \ldots \bullet}}_{p\;\mathrm{vectors}}}\!\!\!.\nonumber
 \end{equation}
By using the multi-index notation they can be written as
\begin{equation}
  Z_{(p)}= P_{I}+{}^{\gamma}\!P_{I}, \nonumber
  \end{equation}
with $p=|I|$ running from $0$ to $n=\dim V$ and $I$ accounting for a multi-index formed by a set of indices which are antisymmetric by permutation of
pairs. Objects with the algebraic properties of  $P_{I}$ are known as antisymmetric tensors of rank ``p".

The dimension of the space $\mathcal{Z}_{(p)}$ is
$(n!(n+1))/(p!(n-p)!)$.
\begin{prop}

The $\underline{\mathbf{b}}{}^{\alpha}$-generators of the DKP algebra act from the left on the space $\mathcal{Z}_{(p)}$:
\begin{equation}
\mathrm{DKP} \times \mathcal{Z}_{(p)} \rightarrow \mathcal{Z}_{(p)}:
( \underline{\mathbf{b}}{}^{\alpha},Z_{(p)})\mapsto \underline{\mathbf{b}}{}^{\alpha}Z_{(p)}, \nonumber
\end{equation}
$\underline{\mathbf{b}}{}^{\alpha}Z_{(p)} \in \mathcal{Z}_{(p)}$.

\begin{proof} One can easily check  from the idempotency of $P$ that ${}^{\alpha}\!PP_{I}={}^{\alpha}\!P_{I}$, and  using relations
(\ref{a00}) along with  $P_{\tend{\alpha}}({}^{\gamma}\!P)=g^{-1}(\alpha,\gamma) P$  we compute,
\begin{align*}
\underline{\mathbf{b}}{}^{\alpha}Z_{(p)}&=\underline{\mathbf{b}}{}^{\alpha}[P_{I}+{}^{\gamma}\!P_{I}]  \\
&= [{}^{\alpha}\!P - P_{\tend{\alpha}}][P_{I}+{}^{\gamma}\!P_{I}] \\
&= {}^{\alpha}\!P_{I}-g^{-1}(\alpha,\gamma)P_I,
\end{align*}
which is again an element of $\mathcal{Z}_{(p)}$. The algebra is associative and it is easily seen that $1_{\mathrm{DKP}}Z_{(p)}=Z_{(p)}$.
\end{proof}
\end{prop}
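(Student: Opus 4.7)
The plan is to directly compute $\underline{\mathbf{b}}{}^{\alpha} Z_{(p)}$ by substituting the explicit form $\underline{\mathbf{b}}{}^{\alpha} = ({}^{\alpha}\!P) - (P_{\tend{\alpha}})$ from (\ref{nmdkp}) and a generic element $Z_{(p)} = (P_I) + ({}^{\gamma}\!P_I)$ of $\mathcal{Z}_{(p)}$, and to check that each of the four resulting products lies again in $\mathcal{Z}_{(p)}$. Closure of $\mathcal{Z}_{(p)}$ under the left $\mathrm{DKP}$ action then amounts to showing that these four products produce only terms of the two admissible shapes $(P_J)$ and $({}^{\sigma}\!P_J)$ with $|J|=p$.

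First I would handle the two ``easy'' products. The product $({}^{\alpha}\!P)(P_I)$ reduces to $({}^{\alpha}\!P_I)$ by idempotency of $(P)$ and the definition (\ref{d}) of the bra/ket building blocks. The mixed product $(P_{\tend{\alpha}})({}^{\gamma}\!P_I)$ can be evaluated by first contracting $(P_{\tend{\alpha}})({}^{\gamma}\!P) = g^{-1}(\alpha,\gamma)(P)$ via (\ref{d11}); the remaining $(P)(P_I)$ collapses to $(P_I)$, giving $g^{-1}(\alpha,\gamma)(P_I)$.

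Next I would verify that the other two products vanish, so that no unwanted shapes survive. The cross term $({}^{\alpha}\!P)({}^{\gamma}\!P_I)$ contains the factor $({}^{\alpha}\!P)({}^{\gamma}\!P) = (\alpha)(P)(\gamma)(P)$, and $(P)(\gamma)=0$ by (\ref{fr1}), matching the middle relation of (\ref{a00}). Similarly $(P_{\tend{\alpha}})(P_I)$ contains $(P)(\tend{\alpha})(P)$ at the left, and $(\tend{\alpha})(P)=0$ again by (\ref{fr1}) (equivalently by the first relation of (\ref{a00})). Collecting what remains, one obtains
\begin{equation*}
\underline{\mathbf{b}}{}^{\alpha} Z_{(p)} = ({}^{\alpha}\!P_I) - g^{-1}(\alpha,\gamma)(P_I),
\end{equation*}
which is manifestly of the form $(P_J) + ({}^{\sigma}\!P_J)$ with $|J|=p$, hence belongs to $\mathcal{Z}_{(p)}$.

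Finally, the statement that the DKP identity $1_{\mathrm{DKP}} = (P) + \sum_i ({}^i\!P_i)$ acts as the identity on $\mathcal{Z}_{(p)}$ follows from $(P)(P_I) = (P_I)$, $(P)({}^{\gamma}\!P_I)=0$, $\sum_i({}^i\!P_i)(P_I) = 0$ (since $(P_i)(P) = 0$ by (\ref{a00})), and $\sum_i({}^i\!P_i)({}^{\gamma}\!P_I) = ({}^{\gamma}\!P_I)$, the last identity using the endomorphism algebra established in Section~4. There is no real obstacle: everything follows mechanically from (\ref{a00}), (\ref{d1a}), (\ref{fr1}) and (\ref{d11}). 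The only point worth flagging is bookkeeping of the two distinct annihilation mechanisms --- $(P)(v)=0$ on one side and $({}^{\alpha}\!P)({}^{\gamma}\!P)=0$ on the other --- which together are what force the action to preserve the invariant subspace.
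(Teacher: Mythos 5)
Your proposal is correct and follows essentially the same route as the paper: substitute $\underline{\mathbf{b}}{}^{\alpha}=({}^{\alpha}\!P)-(P_{\tend{\alpha}})$ into the product with $Z_{(p)}=(P_I)+({}^{\gamma}\!P_I)$ and use relations (\ref{a00}), (\ref{d11}) and (\ref{fr1}) to reduce the four cross terms, arriving at the same expression $({}^{\alpha}\!P_I)-g^{-1}(\alpha,\gamma)(P_I)\in\mathcal{Z}_{(p)}$. You merely spell out the two vanishing cross terms and the action of $1_{\mathrm{DKP}}$ in more detail than the paper does, which is harmless.
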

 Some particular cases will be presented next.
\subsection{Scalars and Covectors}
In this case the operator (\ref{nmdkp}) acts on the subspace $\mathcal{Z}_{(p=0)}=\Pi_{0}\mathrm{G}_{n}\Pi_0+\Pi_{1}\mathrm{G}_{n}\Pi_0\mathrm{G}_{n}\Pi_{0}\equiv \Pi_{0}\mathrm{G}_{n}\Pi_0+\Pi_{1}\mathrm{G}_{n}\Pi_0$.
\begin{align*}
\underline{\mathbf{b}}{}^{\alpha}Z_{(0)}&=\underline{\mathbf{b}}{}^{\alpha}[s(P)+{}^{\gamma}\!P]  \\
&= [{}^{\alpha}\!P - P_{\tend{\alpha}}][s(P)+{}^{\gamma}\!P] \\
&= s({}^{\alpha}\!P)-g^{-1}(\alpha,\gamma)P,
\end{align*}
where $s \in \mathbb{F}$ and $\gamma \in V^*$.
\subsection{Vectors and $(1,1)$-tensors}
 This is the case $p=1$. The operator (\ref{nmdkp}) acts on the subspace
\begin{equation}
\mathcal{Z}_{(1)}=\Pi_{0}\mathrm{G}_{n}\Pi_1+\Pi_1\mathrm{G}_{n}\Pi_0\mathrm{G}_{n}\Pi_{1}. \nonumber
\end{equation}
For $\alpha$ an element of $V^{*}$ and $v$, $w$ any elements of $V$,
a typical element of the space $\mathcal{Z}_{(1)}$ has the form
\begin{equation}
P_{v}+  {}^{\alpha}\!P_{w}, \nonumber
\end{equation}
which in the basis $\{e_1,\ldots, e_n\}$ of $V$ and $\{e^1,\ldots, e^n\}$ of $V^*$ reads
\begin{equation}
Z_{(1)}=v^i(P_{i})+T^{\;k}_{l}({}^{l}\!P_{k}),\nonumber
\end{equation}
where $P_i=Pe_i$ and ${}^{l}\!P_{k}=e^l(P)e_k$.
%$\textbf{Proposition 3.}$:
%\end{proposition}
 Once again, we compute the algebra multiplication
\begin{align*}
\underline{\mathbf{b}}{}^{\alpha}Z&=[{}^{\alpha}\!P - P_{\tend{\alpha}}][P_{v}+  {}^{\beta}\!P_{w}]  \\
&= {}^{\alpha}\!P_{v}-g^{-1}(\alpha,\beta)P_w
\end{align*}
which again belongs to $\mathcal{Z}_{(1)}$.  All the
computations follow from relations (\ref{a00}) and (\ref{d11}).

The next invariant subspace, $p=2$, is  the space of pairs comprised of ${2 \choose 0}$ and ${2 \choose 1}$ tensors. Elements describing these pairs are written
  in the algebra in the form $Z_{(2)}= P_{vw}+^{\alpha}P_{xy}$. Note that $P_{vw}=-P_{wv}$ and $^{\alpha}P_{xy}=-^{\alpha}P_{yx}$ as a
  consequence of the  antisymmetric nature of product in their definitions.

  Similar results follow for antisymmetric tensors up to the rank $p=n$.
\section{DWH Tensor Field Theory and DKP Algebra}
In this section we will show how the actions of the DKP algebra give rise to Hamiltonian equations of the same form as the ones of the DWH theory.
To this purpose we have to globalize the algebras.

Suppose $E_1\overset{\pi_1}{\longrightarrow}M$ is a vector bundle over a $n$-dimensional Euclidean manifold $M$. Locally the bundle can be identified with $M\times W$ where all fibers are identified with the fixed vector space $W=V\oplus V^*$ introduced in Section 2. We
assume once again that $W$ is endowed with the bilinear form (\ref{bf1}) and its corresponding quadratic form $(w)^2=\mathcal{Q}_{W}[w]$. At a point $q\in
M$, this quadratic form on the fiber $W_q=\pi_1^{-1}(q)$ can be used to construct the Clifford algebra $\mathrm{G}_n(W_q)$. The result is the Clifford
bundle $\mathrm{G}_n(E_1)\rightarrow M$ of $E_1$. We call such an algebra bundle $\mathrm{G}_{(M,W)}$,
\begin{equation}
\mathrm{G}_{(M,W)}:=\bigcup_{q\in M}\{q\}\times\text{G}_n(W_q). \nonumber
\end{equation}
 Since $\mathrm{DKP}\subset \mathrm{G}_{n}$, the $\mathrm{DKP}_{(M,W)}$ bundle  is under consideration too.

Let $\psi$ be a section of $\mathrm{G}_{(M,W)}$ with values in the subspaces $\mathcal{Z}_{(p)}$ of $\mathrm{G}_{(M,W)}$. Let
$\Gamma(\mathcal{Z}_{(p)})$ denote the space of these sections. Thus $\psi$ is  a map $M\rightarrow \mathcal{Z}_{(p)}$, such that $\psi(q) \in
\mathcal{Z}_{(p)}(q)$ for all $q\in M$, where $\mathcal{Z}_{(p)}(q)\subset \mathrm{G}_{(M,W)}(q)$ denotes the vector-space fiber of
$\mathcal{Z}_{(p)}$ above the point $q$. The DKP operators act on these fibers.

Let $\mathfrak{F}(M)$ denote the algebra (over real numbers) of all real-valued, $\mathrm{C}^{\infty}$ functions on $M$. The space
$\Gamma(\mathcal{Z}_{(p)})$ is an $\mathfrak{F}(M)$-module. The product $f\psi$ is defined point-wise:
\begin{equation}
(f\psi)(q)=f(q)\psi(q) \nonumber
\end{equation}
for $q \in M$.
Let $\psi_a$, $1\leq a \leq \dim \mathcal{Z}_{(p)}(q)$ be an $\mathfrak{F}(M)$-basis of $\Gamma(\mathcal{Z}_{(p)})$. Each $\psi \in
\Gamma(\mathcal{Z}_{(p)})$ can be written in the form $\psi=f^a\psi_a$ with $f^a \in \mathfrak{F}(M)$.

Now we describe the action of the DKP operators. For $q \in M$, let $L(\mathcal{Z}_{(p)}(q))$ denote the  algebra of $\mathbb{R}$-linear maps
\begin{equation}
\underline{\mathbf{b}}:\mathcal{Z}_{(p)}(q)\rightarrow \mathcal{Z}_{(p)}(q). \nonumber
\end{equation}
according to what has been established in Proposition~2 of the  Section~6. As~$q$ varies, we obtain an algebra bundle, denoted by $L(\mathcal{Z}_{(p)})$. Let $\Gamma(L(\mathcal{Z}_{(p)}))$ denote the space of sections of this algebra bundle. An element $\underline{\mathbf{b}} \in \Gamma(L(\mathcal{Z}_{(p)}))$ has a value, at each  point $q \in M$, which is an $\mathbb{R}$-linear map
$$
\underline{\mathbf{b}}(q): \mathcal{Z}_{(p)}(q)\rightarrow \mathcal{Z}_{(p)}(q).
$$
Two elements, $\underline{\mathbf{b}}$, $\underline{\mathbf{b}}^{\prime}$ of $\Gamma(L(\mathcal{Z}_{(p)}))$ can be multiplied, that is,
\begin{equation}
(\underline{\mathbf{b}}\,\underline{\mathbf{b}}^{\prime})(q)=\underline{\mathbf{b}}(q)\underline{\mathbf{b}}^{\prime}(q) \nonumber
\end{equation}
for $q \in M$. This makes $\Gamma(L(\mathcal{Z}_{(p)}))$ an algebra over $\mathbb{F}$. Therefore a section $\psi \in \Gamma(\mathcal{Z}_{(p)})$ can be
multiplied by a $\underline{\mathbf{b}} \in \Gamma(L(\mathcal{Z}_{(p)}))$ as follows:
\begin{equation}
(\underline{\mathbf{b}}\psi)(q)=\underline{\mathbf{b}}(q)\psi(q) \nonumber
\end{equation}
for $q \in M$. In this way, $\Gamma(\mathcal{Z}_{(p)})$ becomes an $\Gamma(L(\mathcal{Z}_{(p)}))$-module.

It is convenient for the purpose of the formulation of the DKP description of the DWH theory to be able to write  these algebraic operations in terms of
tensor components.

In Section 6 we classified the invariant subspaces of the algebra $\mathrm{G}_{n}$ according to the index  ``$p$" which describes the rank of the
tensor objects in the algebra. Let us start with $p=0$. This corresponds to the description of a single  scalar field variable ``$y$". We denote
$(y,\pi_i)$ the tensor field variables. These variables will be represented as  elements of the space $\mathcal{Z}_{(0)}(q) \subset \mathrm{G}_{n}(W_q)$.
In this way, to the pair of field variable $(y,\pi_i)$ we assign the element
 \begin{equation}
 \mathcal{\psi}_{(0)}= y(P)+\pi_{i}({}^{i}\!P) \label{mas1}
 \end{equation}
of $\Gamma(\mathcal{Z}_{(p=0)})$, which assumes values in the fiber $\mathcal{Z}_{(0)}(q)$.

The reader should take careful notice of how we position the indices.
The positioning of covariant and contravariant indices is critical because these indices have to be set in consistence with the elements of the algebra on
which the DKP generators will act. Therefore their positioning is not directly following the notation conventions of the standard DWH theory at this
stage. In the DWH theory the field indices are generic indices commonly written with Roman letters. They  refer to coordinates in the space of field
variables.

Next we introduce the differential operator:

\begin{remark}
We denote $\nabla_{\!\!\mathcal{Z}_{(0)}}$ the differential operator
\begin{equation}
(P)\frac{\partial }{\partial y}+({}^{j}\!P)\frac{\partial }{\partial \pi^{j}} \nonumber
\end{equation}
which acts  as derivations on the functions on $M\times \mathcal{Z}_{(0)}$.
\end{remark}
From these settings, a prototype for the DWH equations for the fields of the form (\ref{mas1}) emerges:
\begin{equation}
\underline{\mathbf{b}}{}^{a}\partial_a \phi_0=\nabla_{\!\!\mathcal{Z}_{(0)}} \mathcal{H}, \label{s11}
\end{equation}
where $\partial_a$, $a\!=\!1,\ldots,n$ is the standard basis vector fields induced by the choice of coordinates on $M$.

 Let us check this claim.
The action on the l.h.s. of (\ref{s11}) is as follows
\begin{align}
\underline{\mathbf{b}}{}^{a}\partial_a \phi_0&=\underline{\mathbf{b}}{}^{a}\partial_{a}\{y(P)+\pi_{i}({}^{i}\!P)\} \nonumber \\
&=\{{}^{a}\!P - \delta^{ab}(P_{b})\}\{\partial_{a}y(P)+\partial_{a}\pi_{i}({}^{i}\!P)\} \nonumber \\
&=\partial_{a}y({}^{a}\!P)-\delta^{ab} \delta_{b}^{i} \partial_{a}\pi_{i}(P) \nonumber \\
&=\partial_{a}y({}^{a}\!P)- \partial_{a}\pi^{a}(P). \label{lhw}
 \end{align}
 The r.h.s. of (\ref{s11}) is
\begin{equation}
\nabla_{\!\!\mathcal{Z}_{(0)}} \mathcal{H}= (P)\frac{\partial \mathcal{H}}{\partial y}+
          ({}^{a}\!P)\frac{\partial \mathcal{H}}{\partial \pi^{a}}.\label{srw}
\end{equation}
 We equate   (\ref{lhw}) to (\ref{srw}) to obtain,
\begin{equation}
\partial_{a}\pi^a=-\frac{\partial \mathcal{H}}{\partial y} \nonumber
\end{equation}
and
\begin{equation}
\partial_{a}y=\frac{\partial \mathcal{H}}{\partial \pi^{a}}. \nonumber
\end{equation}

The next case is $p=1$. In this case, the pair  $(y^{a},\pi^{\;a}_{b})$ of field variables in $\Gamma(\mathcal{Z}_{(1)})$ is accordingly set as
\begin{equation}
         \psi_{(1)}= y^{a}(P_{\!\!a})+\pi_{b}^{\;a}({}^{b}\!P_{\!\!a}) \nonumber
 \end{equation}
and the operator
\begin{equation}
\nabla_{\mathcal{Z}_{(1)}}:=(P_{\!i})\frac{\partial }{\partial y_{i}}+
          ({}^{j}\!P_{\!i})\frac{\partial }{\partial \pi^{j}_{\;\,i}}. \nonumber
\end{equation}
The DWH-like  equations are
        \begin{equation}
       \underline{\mathbf{b}}{}^{c}\partial_{c}\psi_{(1)}=\nabla_{\!\!\mathcal{Z}_{(1)}} \mathcal{H}. \label{YYZ}
        \end{equation}
The action on the l.h.s. of (\ref{YYZ}) is as follows:
\begin{align}
\underline{\mathbf{b}}{}^{c}\partial_{c}\psi_{(1)}&=\underline{\mathbf{b}}{}^{c}\partial_{c}\{y^{a}(P_{\!\!a})+\pi^{\;a}_{b}({}^{b}\!P_{\!\!a})\}
\nonumber \\
&=\{({}^{c}\!P) - \delta^{cd}(P_{\!d})\}\{\partial_{\mu}y^{a}(P_{\!\!a})+\partial_{c}\pi^{\;a}_{b}({}^{b}\!P_{\!\!a})\} \nonumber \\
&= \partial_{c}y^{a}({}^{c}\!P_{\!\!a})-\; \partial_{c}\pi^{ca}(P_{\!\!a}). \label{lhs1}
\end{align}
  The r.h.s. of (\ref{YYZ}) is
\begin{equation}
\nabla_{\!\!\mathcal{Z}_{(1)}} \mathcal{H}= (P_{\!\!a})\frac{\partial\mathcal{ H}}{\partial y_{a}}+
          ({}^{c}\!P_{\!\!a})\frac{\partial \mathcal{H}}{\partial \pi^{c}_{\;a}}.\label{rhs2}
\end{equation}
 By equating  (\ref{lhs1}) to (\ref{rhs2}) as proclaimed by (\ref{YYZ}) we obtain:
\begin{equation}
\partial_{c}\pi^{ca}=-\frac{\partial \mathcal{H}}{\partial y_{a}}\Rightarrow \partial_{c}\pi^{c}_{\;a}=-\frac{\partial \mathcal{H}}{\partial y^{a}}
\nonumber
\end{equation}
and
\begin{equation}
\partial_{c}\,y^{a}=\frac{\partial \mathcal{\mathcal{H}}}{\partial \pi^{c}_{\;a}}. \nonumber
\end{equation}
We have used the Euclidean metric to lower and raise the indices.
These DWH-like equations for the Hamiltonian $\mathcal{H}$ describe the vector field $\partial_{c}$ coupled to the
operator $\underline{\mathbf{b}}{}^{c}$ that represents the k-symplectic structure. In order to retrieve the DHW theory from this construction we have to
turn to  the spacetime vector field $\partial_{\mu}$ and use the operator $\underline{\mathbf{\beta}}{}^{\mu}\partial_\mu$ instead of
$\underline{\mathbf{b}}{}^{c}\partial_{c}$. In other words we have to change from the Euclidean frame $\partial_a$ to the spacetime frame
$\partial_{\mu}$. This will be done in the next section. The operators $\underline{\mathbf{\beta}}{}^{\mu}$ were used in \cite{Igor1} to find the De
Donder-Weyl Hamiltonian field equations in matrix form. They were required to satisfy the relation
\begin{equation}
\underline{\mathbf{\beta}}{}^{\mu}\underline{\mathbf{\beta}}{}^{\nu}\underline{\mathbf{\beta}}{}^{\gamma}+
\underline{\mathbf{\beta}}{}^{\gamma}\underline{\mathbf{\beta}}{}^{\nu}\underline{\mathbf{\beta}}{}^{\mu}=-\delta^{\mu\nu}\underline{\mathbf{\beta}}{}^{\gamma}
-\delta^{\gamma\nu}\underline{\mathbf{\beta}}{}^{\mu}\label{ndkc}
\end{equation}
which is not the form usually  found in particle Physics in the DKP theory. The minus signs in the r.h.s. of (\ref{ndkc}) are crucial in order to represent the
k-symplectic structure. In Section 5 we clarified the origin of these minus signs in the construction of the $\underline{\mathbf{\beta}}$'s.

\section{DWH Theory for Antisymmetric Fields Arising out of the DKP Algebra}
Let $X$ denote the Minkowski spacetime manifold. Coordinates of a point $x\in X$ will be denoted by $x^{\mu}$, $\mu=0,\ldots,3$.

In order to obtain the DWH theory for antisymmetric fields we will  transfer the bundle structures described in Section~7 to spacetime.

Let $\mathcal{Z}_{(p)}\stackrel{\pi}{\longrightarrow} M$ be the vector bundle with fiber $\mathcal{Z}_{(p)}(q)$ as described in Section~7. Let
$h: X\rightarrow M$ be a continuous map of the spacetime $X$ into $M$, $\text{dim M}=4$. Consider the set
\begin{equation}
h^*\mathcal{Z}_{(p)}\equiv \{(x,Z)\in X\times \mathcal{Z}_{(p)}:h(x)=\pi(Z)\} \nonumber
\end{equation}
of points in the product $X\times \mathcal{Z}_{(p)}$.
Now, any section (field) of $\mathcal{Z}_{(p)}\stackrel{\pi}{\longrightarrow} M$ induces a section of $h^*\mathcal{Z}_{(p)}$ called the pullback section
$h^*s:=s\circ h$. In this way,  the DWH equations for fields we described in Section~6.3 can be written for  fields over the spacetime
$X$. The result as it will be shown shortly is the DWH theory for antisymmetric fields.

We claim that the DWH equations for antisymmetric fields of rank $p$ arise out of the equation
 \begin{equation}
 \underline{\mathbf{\beta}}{}^{\mu}\partial_\mu \Psi_{(p)}=\nabla_{\!\!\mathcal{Z}_{(p)}} H. \label{s1}
 \end{equation}
 In this equation, the first order differential operator $\underline{\mathbf{\beta}}{}^{\mu}\partial_\mu$ will now act on sections of
 $h^*\mathcal{Z}_{(p)}$. Therefore, $\Gamma(\mathcal{Z}_{(p)})$ becomes a $\mathfrak{F}(X)$-module via pullback $h^*\psi_{(p)}$.

 Let us prove this claim. Starting with the case $p=0$ we write
 \begin{align*}
\Psi_{0}&= h^*y(P)+h^*\pi_i({}^{i}\!P)  \\
 &=\textsf{y}(P)+\uppi_i({}^{i}\!P),
 \end{align*}
 where we have introduced the notations: $h^*\psi_{(0)}\equiv\Psi_{0}$, $h^*y \equiv \textsf{y}$ and $h^*\pi_i\equiv \uppi_i$ for the pullbacks.
After choosing local charts around $q\in M$ as well as around $x\in X$, we denote $\Lambda^{\;\mu}_{a}\partial_{\mu}$ the pull back
$h^*(\partial_a)$ of the
standard basis  vector field $\partial_a$ on $M$ written in the spacetime basis vector field  $\partial_{\mu}$ on $X$.
We make the transformation  $\underline{\mathbf{b}}{}^{a}\partial_{a}\rightarrow \underline{\mathbf{b}}{}^{a}\!\Lambda^{\;\mu}_{a}\partial_{\mu}$. Notice
that the contraction $\underline{\mathbf{b}}{}^{a}\!\Lambda^{\;\mu}_{a}$,  with $\underline{\mathbf{b}}{}^{a}=({}^{a}\!P) - \delta^{ab}(P_{b})$, satisfies
the relations (\ref{ndkc}) with the identification $\underline{\mathbf{b}}{}^{a}\!\Lambda^{\;\mu}_{a} \equiv \underline{\mathbf{\beta}}{}^{\mu}$.

 For $p=0$, the action on the l.h.s. of (\ref{s1}) is as follows
\begin{align}
\underline{\mathbf{\beta}}{}^{\mu}\partial_\mu
\Psi_{0}&=\underline{\mathbf{b}}{}^{a}\!\Lambda^{\;\mu}_{a}\partial_{\mu}\{\textsf{y}(P)+\uppi_{i}({}^{i}\!P)\} \nonumber \\
&=\{({}^{a}\!P) - \delta^{ab}(P_{b})\}\Lambda^{\;\mu}_{a}\{\partial_{\mu}\textsf{y}(P)+\partial_{\mu}\uppi_{i}({}^{i}\!P)\} \nonumber \\
&=\Lambda^{\;\mu}_{a}\partial_{\mu}\textsf{y} ({}^{a}\!P)-\partial_{\mu}\Lambda^{\;\mu}_{a}\uppi^a(P), \label{lhss}
\end{align}
 and the r.h.s. of (\ref{s1}) explicitly is
\begin{equation}
\nabla_{\!\!\mathcal{Z}_{(0)}} H= (P)\frac{\partial H}{\partial \textsf{y}}+
          ({}^{a}\!P)\frac{\partial H}{\partial \uppi^{a}}.\label{srh}
\end{equation}
 By equating   (\ref{lhss}) to (\ref{srh}) as claimed we obtain
\begin{equation}
\partial_{\mu}[\Lambda^{\;\mu}_{a}\uppi^a]=-\frac{\partial H}{\partial \textsf{y}} \nonumber
\end{equation}
and
\begin{equation}
\Lambda^{\;\mu}_{a}\partial_{\mu}\textsf{y}=\frac{\partial H}{\partial \uppi^{a}}\Rightarrow \partial_{\mu}\textsf{y}=\frac{\partial
H}{\partial[\Lambda^{\;\mu}_{a}\uppi^{a}]}. \nonumber
\end{equation}
These equations suggest that we can chose $\textsf{p}^{\mu}:=\Lambda^{\;\mu}_{a}\uppi^a$ to represent the polymomenta of the DWH  theory. Therefore we can think of  the transformation~$\Lambda$ between coordinate charts on $X$ as the map from the field variables $\uppi^a$ in the pullback of the $\mathfrak{F}(M)$-module
$\Gamma(\mathcal{Z}_{(p)})$ to the polymomenta of the DWH theory.

The next case is $p=1$. In this case, the pair  $(\textsf{y}^{a},\uppi^{\;a}_{b})$ of field variables in $\mathrm{DKP}_{(M,W)}$ is accordingly
set as
\begin{equation}
         \Psi_{(1)}= \textsf{y}^{a}(P_{\!\!a})+\uppi_{b}^{\;a}({}^{b}\!P_{\!\!a}) \nonumber
 \end{equation}
and the operator
\begin{equation}
\nabla_{\!\!\mathcal{Z}_{(1)}}:=(P_{\!i})\frac{\partial }{\partial \textsf{y}_{i}}+
          ({}^{j}\!P_{\!i})\frac{\partial }{\partial \uppi^{j}_{\;i}}. \nonumber
\end{equation}
The DWH  equations are
        \begin{equation}
       \underline{\mathbf{\beta}}{}^{\mu}\partial_\mu \Psi_{(1)}=\nabla_{\!\!\mathcal{Z}_{(1)}} H. \label{q1a}
        \end{equation}
The action on the l.h.s. of (\ref{q1a}) can be computed:
\begin{align}
\underline{\mathbf{\beta}}{}^{\mu}\partial_\mu\Psi_{(1)}&=\Lambda^{\mu}_{\;c}\underline{\mathbf{b}}{}^{c}\partial_{\mu}\{\textsf{y}^{a}(P_{\!\!a})+\uppi^{\;a}_{b}({}^{b}\!P_{\!\!a})\}
\nonumber \\
&=\Lambda^{\mu}_{\;c}\{({}^{c}\!P) - \delta^{cd}(P_{\!d})\}\{\partial_{\mu}\textsf{y}^{a}(P_{\!\!a})+\partial_{\mu}\uppi^{\;a}_{b}({}^{b}\!P_{\!\!a})\}
\nonumber \\
&= \Lambda^{\mu}_{\;c}\partial_{\mu}\textsf{y}^{a}({}^{c}\!P_{\!\!a})-\Lambda^{\mu}_{\;c}\; \partial_{\mu}\uppi^{ca}(P_{\!\!a})  \label{lhs}
\end{align}
 and the r.h.s is
\begin{equation}
\nabla_{\!\!\mathcal{Z}_{(1)}} H= (P_{\!\!a})\frac{\partial H}{\partial \textsf{y}_{a}}+
          ({}^{c}\!P_{\!\!a})\frac{\partial H}{\partial \uppi^{c}_{\;a}}.\label{rhs}
\end{equation}
 By equating  (\ref{lhs}) to (\ref{rhs}) as declared in (\ref{q1a}), we obtain:
\begin{equation}
\partial_{\mu}[\Lambda_{\;c}^{\mu}\uppi^{ca}]=-\frac{\partial H}{\partial \textsf{y}_{a}}\Rightarrow \partial_{\mu}\textsf{p}^{\mu a}=-\frac{\partial
H}{\partial \textsf{y}_{a}}\Rightarrow  \partial_{\mu}\textsf{p}^{\mu}_{\;a}=-\frac{\partial H}{\partial \textsf{y}^{a}} \nonumber
\end{equation}
and
\begin{equation}
\Lambda^{\mu}_{\;c}\partial_{\mu}\textsf{y}^{a}=\frac{\partial H}{\partial \uppi^{c}_{\;a}}\Rightarrow
\partial_{\mu}\textsf{y}^{\;a}=\frac{\partial H}{\partial (\Lambda^{\mu}_{\;c}\uppi^{c}_{\;a})}\Rightarrow
\partial_{\mu}\textsf{y}^{a}=\frac{\partial H}{\partial \textsf{p}^{\mu}_{\;a}}. \nonumber
\end{equation}
We have used the Euclidean metric to lower and raise the Latin indices. These equations match the form of the DWH equations for the Hamiltonian~$H$. Thus, in general for all values of $``p"$ we state these results as follows.

\begin{definition}
We set the operator
\begin{equation}
\nabla_{\!\!\mathcal{Z}_{(p)}}:=(P_{I})\frac{\partial }{\partial \textsf{y}_{I}}+
          ({}^{a}\!P_{I})\frac{\partial }{\partial \uppi^{a}_{\;I}}. \label{dzop}
\end{equation}
\end{definition}
 This operator acts  as derivations on the functions on $X\times \mathcal{Z}_{(p)}$. Because the derivations $\frac{\partial }{\partial
 \textsf{y}_{I}}$ and $\frac{\partial }{\partial \pi^{a}_{\;I}}$ are ``coefficients" of $(P_{I})$ and $({}^{\alpha}\!P_{I})$,
 $\nabla_{\!\!\mathcal{Z}_{(p)}}F$ is an element of the invariant subspace $\mathcal{Z}_{(p)}$ for a function $F$ on $X\times \mathcal{Z}_{(p)}$. From the settings,
$\Psi_{(p)}=\textsf{y}^{I}(P_{I})+\uppi^{\;I}_{a}({}^{a}\!P_{I})=\textsf{y}^{j_1,\ldots,j_p}(P_{j_1,\ldots,j_p})+\uppi^{\;j_1,\ldots,j_p}_{a}({}^{a}\!P_{j_1,\ldots,j_p})$
and $p^{\mu}_{\;I}=\Lambda^{\mu}_{c} \uppi^{c}_{\;I}$
we state:

\begin{theorem*}
The DWH equations for the pair $(y^{I},p^{\mu}_{\;I})$ of field variables and polymomenta  assume the form:
        \begin{equation}
       \underline{\mathbf{\beta}}{}^{\mu}\partial_\mu \Psi_{(p)}= \nabla_{\!\!\mathcal{Z}_{(p)}}H. \label{q1g}
        \end{equation}
\end{theorem*}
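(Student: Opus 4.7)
The plan is to generalize the explicit $p=0$ and $p=1$ computations of Section~8 to arbitrary rank $p$, noting that the entire derivation rests on four uniform algebraic identities, all of which remain valid when the single lower index is replaced by a multi-index $I$ of cardinality $p$:
\begin{equation*}
({}^a P)(P_I) = ({}^a P_I), \qquad (P_b)({}^c P_I) = \delta^c_b\,(P_I), \qquad (P_b)(P_I) = 0, \qquad ({}^a P)({}^c P_I) = 0.
\end{equation*}
The first comes from idempotency of $(P)$; the second from relation (\ref{gn1}) combined with $(e_b)(P)=0$; the last two follow because $(e_b)(P) = 0$ and $(P)(e^c) = 0$ by (\ref{fr1}) make the relevant middle factor annihilate.

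First I would expand the left-hand side of (\ref{q1g}) by writing $\underline{\mathbf{\beta}}^\mu = \Lambda^\mu_{\;a}\underline{\mathbf{b}}^a$ with $\underline{\mathbf{b}}^a = ({}^a P) - \delta^{ab}(P_b)$ and substituting $\Psi_{(p)} = \textsf{y}^I(P_I) + \uppi^{\,I}_a({}^a P_I)$. The four identities above kill the two cross terms, contribute $\textsf{y}^I({}^a P_I)$ from $({}^a P)\,\textsf{y}^I(P_I)$, and contract $(P_b)\,\uppi^{\,I}_c({}^c P_I)$ to $\uppi^{\,aI}(P_I)$. Applying the space-time derivative and absorbing $\Lambda^\mu_{\;a}$ inside the divergence of the polymomentum term, exactly as in the $p=0,1$ cases, gives
\begin{equation*}
\underline{\mathbf{\beta}}^\mu \partial_\mu \Psi_{(p)} = \Lambda^\mu_{\;a}\,\partial_\mu \textsf{y}^I\,({}^a P_I) \;-\; \partial_\mu\!\bigl[\Lambda^\mu_{\;a}\,\uppi^{\,aI}\bigr]\,(P_I).
\end{equation*}
The right-hand side is read off directly from Definition~1,
\begin{equation*}
\nabla_{\!\!\mathcal{Z}_{(p)}} H = (P_I)\,\frac{\partial H}{\partial \textsf{y}_I} + ({}^a P_I)\,\frac{\partial H}{\partial \uppi^{\,a}_{\,I}}.
\end{equation*}

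Since $(P_I)$ and $({}^a P_I)$ are linearly independent basis elements of $\mathcal{Z}_{(p)}$ (the content supporting Proposition~2), equating coefficients produces $\Lambda^\mu_{\;a}\partial_\mu \textsf{y}^I = \partial H/\partial \uppi^{\,a}_{\,I}$ and $\partial_\mu[\Lambda^\mu_{\;a}\uppi^{\,aI}] = -\partial H/\partial \textsf{y}_I$, which under the identification $\textsf{p}^{\mu}_{\,I} := \Lambda^\mu_{\;a}\,\uppi^{\,a}_{\,I}$ (the same one already used for $p=0,1$) become the standard DWH equations $\partial_\mu \textsf{y}^I = \partial H/\partial \textsf{p}^{\mu}_{\,I}$ and $\partial_\mu \textsf{p}^{\mu}_{\,I} = -\partial H/\partial \textsf{y}^I$.

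The main obstacle is essentially combinatorial rather than algebraic: one must be careful that summation over the multi-index $I$ is understood over strictly ordered tuples (or that each unordered tuple is divided by $p!$), so that coefficient extraction on $(P_I)$ and $({}^a P_I)$ is unambiguous and the antisymmetry of $(P_I)$ inherited from (\ref{gn3}) does not produce over-counting. Modulo this bookkeeping, the proof is genuinely uniform in $p$, since the only algebraic facts used are the four identities listed at the outset, and these already powered the $p=0$ and $p=1$ derivations.
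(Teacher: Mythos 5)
Your proposal is correct and follows essentially the same route as the paper's proof: expand $\underline{\mathbf{\beta}}{}^{\mu}\partial_\mu\Psi_{(p)}$ using the projector identities (the paper's relations (\ref{a00}), (\ref{d1a}) and (\ref{fr1}) in multi-index form), identify $\textsf{p}^{\mu}_{\;I}=\Lambda^{\mu}_{\;c}\uppi^{c}_{\;I}$, and equate coefficients of the linearly independent elements $(P_I)$ and $({}^{c}\!P_I)$ against $\nabla_{\!\!\mathcal{Z}_{(p)}}H$. Your added remark on ordered multi-index summation (or the $(p!)^{-1}$ normalization) is a sensible bookkeeping point that the paper handles implicitly via the Appendix conventions.
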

\begin{proof}
First we compute the product on the l.h.s. of (\ref{q1g}),
\begin{align}
\underline{\mathbf{\beta}}{}^{\mu}\partial_\mu \Psi_{(p)} &=\Lambda^{\mu}_{\;c}\{({}^{c}\!P) -
\delta^{cd}(P_{\!d})\}\{\partial_{\mu}\textsf{y}^{I}(P_{I})+\partial_{\mu}\uppi^{\;I}_{a}({}^{a}\!P_{I})\} \nonumber \\
&=\Lambda^{\mu}_{\;c}\partial_{\mu}\textsf{y}^{I}({}^{c}\!P_{I})-\Lambda^{\mu}_{\;c}\delta^{cd} \delta^{a}_{d} \partial_{\mu}\uppi^{\;I}_{a}(P_{I})
\nonumber \\
&=\Lambda^{\mu}_{\;c} \partial_{\mu}\textsf{y}^{I} ({}^{c}\!P_{I})-\partial_{\mu}\textsf{p}^{\mu I}(P_{I}). \label{lhsg}
\end{align}
 Next we equate this result to the r.h.s.
\begin{equation}
\nabla_{\!\!\mathcal{Z}_{(p)}}H= (P_{I})\frac{\partial H}{\partial \textsf{y}_{I}}+
          ({}^{c}\!P_{I})\frac{\partial H}{\partial \uppi^{c}_{\;I}}\label{rhsg}
\end{equation}
of (\ref{q1g}) to obtain
\begin{equation}
\partial_{\mu}\textsf{p}_{\;I}^{\mu}=-\frac{\partial H}{\partial \textsf{y}^{I}}
\quad\text{and}\quad
\partial_{\mu}\textsf{y}^{I}=\frac{\partial H}{\partial \textsf{p}^{\mu}_{\;I}}. \nonumber
\end{equation}
\end{proof}
\section{Poisson-like Bracket}
In this section, we turn our attention to the bracket operation that, as will be shown shortly, can be considered as the analogue of the Poisson bracket for the DHW theory in the DKP representation.

 Let $\underline{\mathbf{\beta}}{}_{\mu}=(\Lambda^{-1}\!)^{\;a}_{\mu}\,\underline{\mathbf{b}}{}_{a}$ where $\underline{\mathbf{b}}{}_{a}$ is given in the definition   (\ref{dkds}) and
$$(\Lambda^{-1}\!)^{\;a}_{\mu}\Lambda^{\mu}_{\;b}=\delta^a_b.$$ We define the following formula  for the bracket:
\begin{definition}
\begin{equation}
\{G,F\}_{\mu}^{(p)}(P):=(p\,!)^{-1}\mathbf{C}_{(p)}[\{\nabla_{\!\!\mathcal{Z}_{(p)}}^{\dag}\!G\}
\underline{\mathbf{\beta}}{}_{\mu}\{\nabla_{\!\!\mathcal{Z}_{(p)}}F\}], \label{fb1}
\end{equation}
where the superscript $``(p)"$ continues to denote the rank of the antisymmetric fields $y^{I}$. $\mathbf{C}_{(p)}$ denotes the contraction operation defined in the Eq. (\ref{Contr}) of the Appendix  and
\begin{equation}
\nabla_{\!\!\mathcal{Z}_{(p)}}^{\dag}=({}^{I}\!P)\frac{\partial}{\partial \textsf{y}^I}+({}^{I}\!P_{\!a})\frac{\partial}{\partial \uppi^{\;I}_{a}}
\end{equation}
is the adjoint of the operator (\ref{dzop}), computed according to the formulas (\ref{adop1}) and (\ref{adop2}) (see Appendix).
\end{definition}
 We will apply the formula (\ref{fb1}) to compute explicitly the case $p=1$ in ample detail and $p=2$ with emphasis on the main steps. The case $p=0$ is the simplest one.

 So, for $p=1$,
\begin{align*}
&\{G,F\}_{\mu}^{(1)}(P)=\mathbf{C}_{(1)}[\{\nabla_{\!\!\mathcal{Z}_{(1)}}^{\dag}\!G\}
\underline{\mathbf{\beta}}{}_{\mu}\{\nabla_{\!\!\mathcal{Z}_{(1)}}F\}] \nonumber \\
&= \mathbf{C}_{(1)}\Biggl[\biggl\{({}^{j}\!P)\frac{\partial G }{\partial \textsf{y}^{j}}+
          ({}^{\;j}\!P_{i})\frac{\partial G }{\partial \uppi^{\;j}_{i}}\biggr\}\biggl\{(\Lambda^{-1})^{\;c}_{\mu}[(P_{\!c}) -
          \delta_{cd}({}^{d}\!P)]\biggr\}   \\
          &  \hspace*{30ex} \times \left\{(P_{\!k})\frac{\partial F }{\partial \textsf{y}_{k}}+
          ({}^{k}\!P_{l})\frac{\partial F}{\partial \uppi^{k}_{\;l}}\right\}\Biggl] \\
          &=\mathbf{C}_{(1)}\Biggl[\biggl\{({}^{j}\!P)\frac{\partial G}{\partial \textsf{y}^{j}}+
          ({}^{j}\!P_{\;i})\frac{\partial G}{\partial \uppi^{\;j}_{i}}\biggr\}\biggl\{(\Lambda^{-1})^{\;c}_{\mu}\delta^k_c(P_{l})\frac{\partial
          F}{\partial \uppi^{k}_{\;l}}   \\
          & \hspace*{30ex} -(\Lambda^{-1})^{\;c}_{\mu}\delta_{cd}({}^{d}\!P_{k})\frac{\partial F}{\partial
          \textsf{y}_k}\biggr\} \Biggr]  \\
          &=\mathbf{C}_{(1)}\left[({}^{j}\!P_{l})\frac{\partial G }{\partial \textsf{y}^{j}}\frac{\partial F}{\partial
          \textsf{p}^{\mu}_{\;l}}-({}^{j}\!P_{k})\frac{\partial G}{\partial \textsf{p}^{\,\mu j}}\frac{\partial F}{\partial
          \textsf{y}_{k}}\right] \\
&(l\rightarrow k \;\text{in the first term})= \left(\frac{\partial G }{\partial \textsf{y}^{j}}\frac{\partial F}{\partial \textsf{p}^{\mu}_{\;k}}-\frac{\partial
G}{\partial \textsf{p}^{\mu j}}\frac{\partial F}{\partial \textsf{y}_{k}}\right)\mathbf{C}_{(1)}\left[({}^{j}\!P_{k})\right] \\
&=\left(\frac{\partial G }{\partial \textsf{y}^{j}}\frac{\partial F}{\partial \textsf{p}^{\mu}_{\;k}}-\frac{\partial
G}{\partial \textsf{p}^{\mu j}}\frac{\partial F}{\partial \textsf{y}_{k}}\right)\delta^j_k(P) \nonumber \\
  &= \left(\frac{\partial G }{\partial \textsf{y}^{j}}\frac{\partial F}{\partial \textsf{p}^{\mu}_{\;j}}-\frac{\partial
  F}{\partial \textsf{y}^{j}}\frac{\partial G}{\partial \textsf{p}^{\mu}_{\;j}}\right)(P).
  \end{align*}
This bracket is similar to the one introduced earlier by Good \cite{Good} and Tapia \cite{Tapia} (see also \cite{MRO,JA}). In those references the Roman field indices are general.
  Here they describe  Euclidean antisymmetric fields -- see next case.

   We compute  the case $p=2$. The conjugate field variables are $(\textsf{y}^{ab}, \textsf{p}^{\mu}_{\;\,ab})$ where $\textsf{y}^{ab}=-\textsf{y}^{ba}$
   and $\textsf{p}^{\mu}_{\;\,ab}=-\textsf{p}^{\mu}_{\;\,ba}$. We have
\begin{align}
\{G,F\}_{\mu}^{(2)}(P)&=\frac{1}{2}\mathbf{C}_{(2)}[\{\nabla_{\!\!\mathcal{Z}_{(2)}}^{\dag}G\}
\underline{\mathbf{\beta}}{}_{\mu}\{\nabla_{\!\!\mathcal{Z}_{(2)}}F\}] \nonumber \\
%&=\cdot\cdot\cdot \nonumber \\
&=\frac{1}{2}\left(\frac{\partial G }{\partial \textsf{y}^{kt}}\frac{\partial F}{\partial \textsf{p}^{\mu}_{\;ab}}-\frac{\partial
G}{\partial \textsf{p}^{\mu kt}}\frac{\partial F}{\partial \textsf{y}_{ab}}\right)\mathbf{C}_{(2)}\left[({}^{kt}\!P_{ab})\right]
\nonumber \\
&=\frac{1}{2}\left(\frac{\partial G }{\partial \textsf{y}^{kt}}\frac{\partial F}{\partial \textsf{p}^{\mu}_{\;ab}}-\frac{\partial
G}{\partial \textsf{p}^{\mu kt}}\frac{\partial F}{\partial y_{ab}}\right)(\delta^k_b\delta^t_a-\delta^t_b\delta^k_a)(P) \nonumber \\
  &= \left(\frac{\partial G }{\partial \textsf{y}^{ab}}\frac{\partial F}{\partial \textsf{p}^{\mu}_{\,\;ab}}-\frac{\partial
  G}{\partial \textsf{p}^{\mu}_{\,\;ab}}\frac{\partial F}{\partial \textsf{y}^{ab}}\right)\!(P).\nonumber
 \end{align}
 We can proceed to higher values of $p$ up to $p=n=\dim M$. We conclude that the general expression for the bracket for antisymmetric fields of rank
 $0\leq p \leq n$ is as follows:
 \begin{equation}
 \{G,F\}_{\mu}^{(p)}(P)=\left(\frac{\partial G }{\partial \textsf{y}^{I}}\frac{\partial F}{\partial
 \textsf{p}^{\mu}_{\,\;I}}-\frac{\partial
  G}{\partial \textsf{p}^{\mu}_{\,\;I}}\frac{\partial F}{\partial \textsf{y}^{I}}\right)\!(P), \nonumber
\end{equation}
where $I$ is the multi-index of length ``$p$".

It is easily  seen that the bracket (\ref{fb1}) is antisymmetric and fulfills the Leibniz rule
\begin{align*}
&\{GF,K\}_{\mu}^{(p)}(P)=(p\,!)^{-1}\mathbf{C}_{(p)}[\{\nabla_{\!\!\mathcal{Z}_{(p)}}^{\dag}\!GF\}
\underline{\mathbf{\beta}}{}_{\mu}\{\nabla_{\!\!\mathcal{Z}_{(p)}}K\}] \\
&=\left\{(p\,!)^{-1}\mathbf{C}_{(p)}[\{\nabla_{\!\!\mathcal{Z}_{(p)}}^{\dag}\!G\}
\underline{\mathbf{\beta}}{}_{\mu}\{\nabla_{\!\!\mathcal{Z}_{(p)}}K\}]\right\}F  \\
 & \hspace*{30ex} +G\left\{(p!)^{-1}\mathbf{C}_{(p)}[\{\nabla_{\!\!\mathcal{Z}_{(p)}}^{\dag}\!F\}
\underline{\mathbf{\beta}}{}_{\mu}\{\nabla_{\!\!\mathcal{Z}_{(p)}}K\}]\right\} \\
&=F\{G,K\}_{\mu}^{(p)}(P)+G\{F,K\}_{\mu}^{(p)}(P).
\end{align*}

In order to verify the analogue of the Jacobi identity we use the formula (\ref{fb1})  to compute
\begin{align}
&\{\{G,F\}_{\mu}^{(p)},\mathrm{K} \}^{(p)}_{\nu}(P)=\!(p\,!)^{-1}\mathbf{C}_{(p)}\!\biggl[\left\{\nabla_{\!\!\mathcal{Z}_{(p)}}^{\dag}   \!\!  \left(\{G,F\}_{\mu}^{(p)} \right)\! \right\}
\underline{\mathbf{\beta}}{}_{\nu}\{\nabla_{\!\!\mathcal{Z}_{(p)}}K\}\biggr]\nonumber \\
&=\!(p\,!)^{-1}\mathbf{C}_{(p)}\!\Biggl[\biggl\{\nabla_{\!\!\mathcal{Z}_{(p)}}^{\dag}\!\!  \left(\frac{\partial G }{\partial \textsf{y}^{I}}\frac{\partial F}{\partial
 \textsf{p}^{\mu}_{\,\;I}}-\frac{\partial
  G}{\partial \textsf{p}^{\mu}_{\,\;I}}\frac{\partial F}{\partial \textsf{y}^{I}}\!\biggl)\! \biggl\}
\underline{\mathbf{\beta}}{}_{\nu}\{\nabla_{\!\!\mathcal{Z}_{(p)}}K \} \right] \nonumber \\
&=\!(p\,!)^{-1}\mathbf{C}_{(p)}\!\Biggl[\left\{\nabla_{\!\!\mathcal{Z}_{(p)}}^{\dag}\!\!  \left(\frac{\partial G }{\partial \textsf{y}^{I}}\frac{\partial F}{\partial
\textsf{p}^{\mu}_{\,\;I}}-\frac{\partial G}{\partial \textsf{p}^{\mu}_{\,\;I}}\frac{\partial F}{\partial \textsf{y}^{I}}\!\right)\! \right\} \nonumber \\
& \hspace{3cm} \times \left\{(P_L)\frac{\partial K}{\partial \textsf{p}^{\nu}_{\,\;L}}-\delta_{cd}(\Lambda^{-1})^c_{\nu}({}^{d}\!P_{L})\frac{\partial K}{\partial \textsf{y}_L}\right\}\Biggr]. \label{IJB}
\end{align}
 The first term in the product of the two curly brackets in the above expression expands as follows:
\begin{align}
&\nabla_{\!\!\mathcal{Z}_{(p)}}^{\dag}\!\!  \left(\frac{\partial G }{\partial \textsf{y}^{I}}\frac{\partial F}{\partial
\textsf{p}^{\mu}_{\,\;I}}-\frac{\partial G}{\partial \textsf{p}^{\mu}_{\,\;I}}\frac{\partial F}{\partial \textsf{y}^{I}}\!\right)\! =\left(({}^{J}\!P)\frac{\partial}{\partial \textsf{y}^J}+({}^{I}\!P_{\!a})\frac{\partial}{\partial \uppi^{\;J}_{a}}\right) \nonumber \\
& \hspace{6cm}\times \left(\frac{\partial G }{\partial \textsf{y}^{I}}\frac{\partial F}{\partial
\textsf{p}^{\mu}_{\,\;I}}-\frac{\partial G}{\partial \textsf{p}^{\mu}_{\,\;I}}\frac{\partial F}{\partial \textsf{y}^{I}}\!\right)\nonumber \\
&=({}^{J}\!P)\left[\frac{\partial^2G}{\partial \textsf{y}^J\partial \textsf{y}^I}\frac{\partial F}{\partial
\textsf{p}^{\mu}_{\,\;I}}+\frac{\partial G}{\partial \textsf{y}^I}\frac{\partial^2F}{\partial
\textsf{y}^J \partial \textsf{p}^{\mu}_{\,\;I}}-\frac{\partial^2G}{\partial
\textsf{y}^J \partial \textsf{p}^{\mu}_{\,\;I}}\frac{\partial F}{\partial \textsf{y}^I}
-\frac{\partial G}{\partial\textsf{p}^{\mu}_{\,\;I}}\frac{\partial^2F}{\partial \textsf{y}^J\partial \textsf{y}^I}\right]\nonumber \\
&\hspace{-0,5cm}+({}^{J}\!P_{a})\left[\frac{\partial^2G}{\partial \uppi^{\;J}_{a}\partial \textsf{y}^I}\frac{\partial F}{\partial
\textsf{p}^{\mu}_{\,\;I}}+\frac{\partial G}{\partial
\textsf{y}^I}\frac{\partial^2F}{\partial \uppi^{\;J}_{a}\partial \textsf{p}^{\mu}_{\,\;I}}-\frac{\partial^2G}{\partial \uppi^{\;J}_{a}\partial \textsf{p}^{\mu}_{\,\;I}}\frac{\partial F}{\partial
\textsf{y}^I}-\frac{\partial G}{\partial
\textsf{p}^{\mu}_{\,\;I}}\frac{\partial^2F}{\partial \uppi^{\;J}_{a}\partial \textsf{y}^I}
\right].\nonumber
\end{align}
Multiplying this result from the left by the second curly bracket in (\ref{IJB}) and further computing the contraction $(p\,!)^{-1}\mathbf{C}_{(p)}$
 we obtain
 \begin{align}
& \{\{G,F\}^{(p)}_{\mu},\mathrm{K} \}^{(p)}_{\nu}(P)=\Biggl[\frac{\partial^2G}{\partial \textsf{y}^J\partial y^I}\frac{\partial F}{\partial
\textsf{p}^{\mu}_{\,\;I}}\frac{\partial K}{\partial \textsf{p}^{\nu}_{\,\;J}}+\frac{\partial G}{\partial \textsf{y}^I}\frac{\partial^2F}{\partial
\textsf{y}^J \partial \textsf{p}^{\mu}_{\,\;I}}\frac{\partial K}{\partial \textsf{p}^{\nu}_{\,\;J}} \nonumber \\
&-\frac{\partial^2G}{\partial
\textsf{y}^J \partial \textsf{p}^{\mu}_{\,\;I}}\frac{\partial F}{\partial \textsf{y}^I}\frac{\partial K}{\partial \textsf{p}^{\nu}_{\,\;J}}
-\frac{\partial G}{\partial\textsf{p}^{\mu}_{\,\;I}}\frac{\partial^2F}{\partial \textsf{y}^J\partial \textsf{y}^I}\frac{\partial K}{\partial \textsf{p}^{\nu}_{\,\;J}}-
\frac{\partial^2G}{\partial \textsf{p}^{\nu}_{\,\;J}\partial \textsf{y}^I}\frac{\partial F}{\partial
\textsf{p}^{\mu}_{\,\;I}}\frac{\partial K}{\partial \textsf{y}^J} \nonumber \\
&-\frac{\partial G}{\partial
\textsf{y}^I}\frac{\partial^2F}{\partial \textsf{p}^{\nu}_{\,\;J}\partial \textsf{p}^{\mu}_{\,\;I}}\frac{\partial K}{\partial \textsf{y}^J}+\frac{\partial^2G}{\partial \textsf{p}^{\nu}_{\,\;J}\partial \textsf{p}^{\mu}_{\,\;I}}\frac{\partial F}{\partial
\textsf{y}^I}\frac{\partial K}{\partial \textsf{y}^J}+\frac{\partial G}{\partial
\textsf{p}^{\mu}_{\,\;I}}\frac{\partial^2F}{\partial \textsf{p}^{\nu}_{\,\;J}\partial \textsf{y}^I}\frac{\partial K}{\partial \textsf{y}^J}\Biggr]
(P),\nonumber
\end{align}
which is the expected result. We would like to call the attention to the role played by the algebraic objects $(P_{L})$,  $({}^{d}\!P_{L})$, $({}^{J}\!P)$ and $({}^{J}\!P_{a})$  in these calculations. They occur in the situations:
\begin{gather*}
({}^{J}\!P)(P_{L})=({}^{J}\!P_{L}), \quad ({}^{J}\!P)({}^{d}\!P_{L})=0, \quad ({}^{J}\!P_{a})(P_{L})=0\\
\text{and} \quad ({}^{J}\!P_{a})({}^{d}\!P_{L})=\delta^d_a({}^{J}\!P_{L}).
\end{gather*}

Finally, with some more effort, one can check that the following generalization of the Jacobi identity is satisfied:
\begin{equation}
\{\{G,F\}_{(\mu},K \}^{(p)}_{\nu)}(P)+\mathrm{cyclic}(G,F,K)=0, \label{FJI}
\end{equation}
where $A_{(\mu\nu)}:=\frac{1}{2}(A_{\mu\nu}+A_{\nu\mu})$. Therefore the formula (\ref{fb1}) can be considered as an analogue of the Poisson bracket in the DKP matrix representation-free form for the DHW theory for antisymmetric fields. This result generalizes the result of ref. \cite{Igor1}. It is worth noting that formulas (\ref{fb1}) and (\ref{FJI}) have similar form to the equivalents obtained in \cite{Igor1}, which are in fact the analogues of the traditional matrix formulation of the Poisson bracket. Here, $\underline{\mathbf{\beta}}{}_{\mu}$ plays the role of the Poisson structure.
\section{Conclusion}
  The action of the DKP algebra on invariant subspaces of the Clifford algebra, including the spaces of antisymmetric fields, was used to formulate the DKP form of the covariant DWH field equations within an algebraic framework independent of matrix representation.

By means of an adequate choice of the sign in the construction of the DKP-generators the modified elements e.g., (\ref{nmdkp}) have led to the correct expression for the $\underline{\mathbf{\beta}}{}^{\mu}$-generators which satisfy the relation (\ref{ndkc})  needed for the DKP formulation of the DWH equations. Effectively, the DKP operator in the form (\ref{nmdkp}) and its action according to Proposition~2 explain why these generators represent  the  $k$-symplectic structure on the space of pairs of field variables $(\textsf{y}^{I},\textsf{p}_{\,\;I}^{\mu})$ of the polymomentum phase space. This leads us to the conclusion that the space of these pairs, as defined here, is the natural space of the conjugate variables for the DWH theory of antisymmetric fields. These results were summarized in the theorem of section 8 which establishes a simple matrix representation-free form for the DWH field equations using the DKP first order differential operator.

  Finally, we have found the formula for a bracket operation that shows consistency with the DWH theory and fulfills a generalization of the properties of the Poisson bracket. In this formula, the  $\underline{\mathbf{\beta}}{}_{\mu}$-generators of the DKP algebra play the  role analogous to that of a  Poisson structure, performing exactly as one would expect when compared to the traditional matrix formulation of the Poisson bracket in mechanics. The algebraic operations resulting from the formula  contributed decisively to the results achieved. Such operations originated from a fully assembled multilinear calculus based on the relationship between the Clifford algebra and the DKP algebra as it was pioneered by Sch\"{o}nberg \cite{Schom}.

  The vectorial nature of this type of bracket is related to the arbitrariness in the choice of the direction of time, as suggested in \cite{Marsden}. This is important for the 3+1 decomposition in the traditional description of the relativistic dynamics of fields. It gives rise to the lapse and shift. Our approach to the bracket brings new perspectives for the advancement of research on this subject and we hope to return to this matter in another publication.

  Overall, we have analyzed many useful aspects of the relationship between the Clifford and the DKP algebras. We believe this to be of practical interest for future applications, in particular the formulation of both algebras in terms of the projector basis (multilinear endomorphisms). This strategy allows direct access to the decomposition of these algebras and has proved to be very useful for the study of their representations.

  It is now interesting to investigate how to extend the DKP formulation of the DWH theory to symmetric tensor fields, with the aim of including metric fields and, hopefully, gravity. It is our intention to address this issue in a future publication.
\section*{Acknowledgments}
I would like to express my gratitude to J. D. M. Vianna for bringing Ref.\cite{Schom} to my attention in the early stages of my academic career and for the countless discussions on the fascinating insights into the geometric algebras contained in that reference. I also thank Jos\'{e} F. R. Neto for reading early drafts of the manuscript and making useful remarks. Finally, I would also like to thank the referees for the comments and suggestions which greatly contributed to the improvement of this work. Special thanks are due to the editor for making the article better.

%\appendix
\section*{Appendix: Multilinear Endomorphisms in $\mathrm{G}_n$}

  In this Appendix we show that the splitting $W=V\oplus V^*$ of the vector space $W$ into two complementary Lagrangian
  subspaces $V$ and $V^*$ under the bilinear form, relation (\ref{bf1}), in addition with the corresponding invariant projector $(P)$, lead to a
   useful basis to expand the algebra $\mathrm{G}_n$. The convenience of this type of basis is that it makes explicit the appearance of the
  spaces of algebraic spinors  \cite{Schom,M.Fernandes,Cartan,Bergdolt1,Bergdolt2} in $\mathrm{G}_n$. These spaces turn out to be the minimal left and
  right ideals of the algebra. We will also introduce the multi-index notation which is particularly useful when dealing with multilinear algebras.

\subsection*{The Projector Basis of the Split Form}

Start with a dual basis $e_1,\ldots, e_n$ and $e^1,\ldots, e^n$ of vectors and covectors respectively of the complementary Lagrangian subspaces $V$ and $V^*$. Set the invariant projectors as follows,
\begin{equation}
\Pi_{p}=(p\,!)^{-1}\sum_{j_1=1}^{n}\cdots\sum_{j_p=1}^{n} {}^{j_1,\ldots,j_p}\!P_{j_p,\ldots,j_1}, \label{pro}
\end{equation}
where
\begin{equation}
{}^{j_1,\ldots,j_p}\!P_{j_p,\ldots,j_1}:= e^{j_1}\cdots e^{j_p}(P)e_{j_p}\cdots e_{j_1},\label{gips}
\end{equation}
and $p=0,\ldots,n$, $\dim V =n$, with the convention that $\Pi_{0}:=P= {\mathcal{N}}_{1}\cdots{\mathcal{N}}_{n}=e_{1}e^{1}\cdots e_{n}e^{n}$. Notice that
the projectors (\ref{pro}) are not postulated but they are constructed from Clifford products of isotropic basis vectors. This construction is due to Sch\"{o}nberg \cite{Schom}.

A multi-index $I$ of length $k$ is a $k$-tuple of positive integers, say $i_1 i_2\ldots i_k$, from the set $\{1,2,\ldots,\dim V\}$. We are going to denote them by the upper-case Roman letters and the associated indices will be denoted by the lower case letters. A multi-index $I$ is said to have the length $p$ if $k=p$ in the $k$-tuple that represents $I$ and in that case, we write $|I|=p$ for the length of $I$. We will also use the summation convention for the multi-index.

Using the multi-index notation we rewrite equations (\ref{pro}) and (\ref{gips}) accordingly,
\begin{equation}
\Pi_{|J|}=(|J|!)^{-1}\sum_{J} {}^{J}\!P_{\overleftarrow{J}}, \label{promi}
\end{equation}
where $J=j_{1}\ldots j_{|J|}$; $\overleftarrow{J}:=j_{|J|}\ldots j_1$ and $|\overleftarrow{J}|=|J|=p$.  The length $|K|$ of the elements can only runs from $0$ to dim $V$ in the $\mathrm{G}_n$ algebra.

As a consequence of the relations (\ref{gn1}), (\ref{gn2}), (\ref{gn3}) and (\ref{fr1}), the elements (\ref{gips}) satisfy the relation,
\begin{equation}
({}^{j_1,\ldots,j_p}\!P_{k_q,\ldots,k_1})({}^{j'_{1},\ldots,j'_{p'}}\!P_{k'_{q'},\ldots,k'_{1}})=
\delta_{q,p'}\delta^{j'_{1},\ldots,j'_{p'}}_{k_1,\ldots,k_q}({}^{j_1,\ldots,j_p}\!P_{k'_{q'},\ldots,k'_{1}}) \label{gg},
\end{equation}
where $\delta_{q,p'}\delta^{j'_{1},\ldots,j'_{p'}}_{k_1,\ldots,k_q}$ denotes the generalized Kronecker deltas,
\begin{gather*}
\delta_{q,p'} \delta^{j'_{1},\ldots,j'_{p'}}_{k_1,\ldots,k_q} = \det\left(
\begin{array}{cccc}
          \delta^{j'_1}_{k_1}& \delta^{j'_1}_{k_2}& \cdots & \delta^{j'_{1'}}_{k_q} \\
          \delta^{j'_2}_{k_1} & \delta^{j'_2}_{k_2} & \cdots & \delta^{j'_{2'}}_{k_q} \\
           \cdot & \cdot & \cdot & \cdot \\
            \delta^{j'_{p'}}_{k_1} & \delta^{j'_{p'}}_{k_2} & \cdots & \delta^{j'_{p'}}_{k_q} % \\
\end{array}
\right),
\end{gather*}
or succinctly stated,
\begin{equation}
({}^{J}\!P_{\overleftarrow{K}})({}^{J'}\!P_{\overleftarrow{K'}})=\delta_{|\overleftarrow{K}|,|J'|}  \Delta^{J'}_{K} ({}^{J}\!P_{\overleftarrow{K'}}), \qquad
|\overleftarrow{K}|=q, \; |J'|=p', \label{cis}
\end{equation}
where $\Delta^{J'}_{K}\equiv\delta^{j'_{1},\ldots,j'_{p'}}_{k_1,\ldots,k_q}$.

Relation (\ref{gg}) or (\ref{cis}) shows that the set of $2^{2n}$ elements (\ref{gips}) of $\mathrm{G}_n$ with $p,q=0,\ldots,n$ and $j_{1}<\cdots<j_{p}$, $k_{1}<\cdots<k_{q}$, $j, k=1,\ldots,n$ are linearly independent and thus form a basis of $\mathrm{G}_n$, which has dimension $2^{2n}$.

In particular, the generators $(e_j)$ and $(e^j)$ can be retrieved from the basis elements ${}^{j_1,\ldots,j_p}\!P_{k_q,\ldots,k_1}$ by writing
\begin{align*}
e^j&=\sum_p (p!)^{-1}({}^{j,j_1,\ldots,j_p}\!P_{ j_p,\ldots,j_1})=\sum_{|J|=0}^{n} (|J|!)^{-1}\;({}^{j,J}\!P_{\overleftarrow{J}}),  \\
e_j&=\sum_p (p!)^{-1}({}^{j_1,\ldots,j_p}\!P_{j_p,\ldots,j_1,j})=\sum_{|J|=0}^{n} (|J|!)^{-1}({}^{J}\!P_{\overleftarrow{J},j}),
\end{align*}
which are seen to satisfy the defining relations (\ref{gn1})--(\ref{gn3}) of the $\mathrm{G}_n$ algebra according to multiplication (\ref{gg}) or (\ref{cis}). Any element of the $\mathrm{G}_n$ algebra can be written in the basis $(^{J} P_{\overleftarrow{K}})$ as follows:
\begin{align}
\Lambda&=\sum_{p,q=0}^n (p!q!)^{-1}  A_{j_1,\ldots,j_p}{}^{k_1,\ldots,k_q}  ({}^{j_1,\ldots,j_p}\!P_{k_q,\ldots,k_1}) \nonumber \\
&=\sum_{|J|,|K|=0}^{n} A_{J}{}^{K}({}^{J}\! P_{\overleftarrow{K}}), \label{egn}
\end{align}
where we can recognize the coefficients $A_{j_1,\ldots,j_p}{}^{k_1,\ldots,k_q}$ as tensor components. Note that, by writing all the terms of the sum explicitly there will occur all types of tensors including the zeroth rank elements regarded as the scalars, the mixed tensors and finally the antisymmetric ones, either covariant or contravariant up to rank $n$. The great advantage of studying the $\mathrm{G}_n$ algebra in this basis is that by using the system of multilinear projectors $\Pi_{|J|}$ we can easily project on tensor spaces of different lengths within $\mathrm{G}_n$. One of the most important projectors is  the minimal idempotent $\Pi_0=(P)$. It projects $\mathrm{G}_n$ onto its minimal
left/right ideals. That is, take $\Lambda \in \mathrm{G}_n$ written in the form (\ref{egn}) and compute the product or right projection
$\Lambda(\Pi_0)=\Lambda(P)$ using rules (\ref{gg}). The result is
\begin{equation}
\psi=\sum_{p=0}^n (p!)^{-1} A_{j_1,\ldots,j_p} ({}^{j_1,\ldots,j_p}\!P)=\sum_{|J|=0}^n (|J|!)^{-1}A_J({}^J \! P). \nonumber
\end{equation}
Let us call the space of these elements  $\mathrm{G}_n(P)$. Such space is isomorphic to the direct sum of the linear spaces of homogeneous forms of all
orders up to $n=\dim V$. The set $\mathrm{G}_n(P)$  is a minimal left ideal of $\mathrm{G}_n$. Clearly, due to rules (\ref{gg}) one can easily check that the
elements $\Lambda \in \mathrm{G}_n$ are endomorphisms of $\mathrm{G}_n(P)$. So $\mathrm{G}_n(P)$ is the space of spinors of the vector space $W=V\oplus V^{*}$
with which we started.

Upon left projection, the analogous arguments lead to the minimal right ideal $(P)\mathrm{G}_n$ which is the dual of $\mathrm{G}_n(P)$.

 It is common in the literature  to define these spinors  as elements of the Grassmann algebra of the Lagrangian subspace $V^{*}$ of $W$ \cite{Mein,
 Deline}. This Grassmann algebra turns out to be a Clifford module.  The dual spinors belong to the Grassmann algebra of the complementary space $V$.
 These Clifford modules are isomorphic to the minimal left/right ideals of $\mathrm{G}_n$. These minimal ideals are known as algebraic spinor spaces
 \cite{Schom, Riesz, Cartan, Bergdolt1, Bergdolt2}.

By using the system of multilinear projectors $\Pi_p$, we see that the projection $\Pi_p\mathrm{G}_n(P)$ is isomorphic to the space of homogeneous forms
of order (rank) $p$. By using the $(\Pi_p)$ we can also write an element of $\mathrm{G}_n(P)$ in a basis free form

\begin{equation}
\psi=\sum_{|J|=0}^n (\Pi_{|J|})\Lambda(P), \nonumber
\end{equation}
where $\Lambda \in \text{G}_n$.

From the multiplication rule (\ref{gg}) the following useful algebraic properties can be obtained which are used throughout the text
          \begin{equation}
          \sum_{|J|=0}^{n}  \Pi_{|J|}=1_{G_n}, \quad   \Pi_{|J|}(\Pi_{|K|})=\delta_{|J|,|K|}(\Pi_{|J|}) \nonumber
          \end{equation}
          and
              \begin{equation}
              \alpha(\Pi_{|J|})=(\Pi_{|J|+1})\alpha, \qquad   \Pi_{|J|}(v)=v(\Pi_{|J|+1}). \label{alp2}
              \end{equation}
In plain words, $\Pi_{|J|}$ changes to $\Pi_{|J|+1}$ when it moves left past a covariant vector, and $\Pi_{|J|}$ changes to $\Pi_{|J|+1}$ when it
moves right past a contravariant vector. We also assume that $\Pi_{-1}=0$ and $\Pi_{n+1}=0$.

\subsection{Adjunction and Contraction}
 Two further operations in the algebra can be defined. The first operation is \textbf{adjunction} which is defined  as follows \cite{Schom},
\begin{equation}
(e_i)^{\dag}=(e_i,0)^{\dag}:=(0,g_{ij}e^j)=(0,\widetilde{e}_i)=g_{ij}(e^j)\equiv (\widetilde{e}_i)\label{adop1}
\end{equation}
and similarly,
\begin{equation}
(e^i)^{\dag}=(0,e^i)^{\dag}:=(g^{ij}e_j,0)= (\tend{e}^i,0)=g^{ij}(e_j)\equiv(\tend{e}^i) \label{adop2}
\end{equation}
and hence $(P)^{\dag}=(P)$. The adjunction is an involution of $\mathrm{G}_n$. In the case of $\mathrm{G}_n$ over the real numbers the adjunction becomes transposition which is an involution corresponding to the transformation $v\rightarrow \sum v^i e^i$, $\alpha\rightarrow \sum \alpha_i e_i$. In the complex case the involution
corresponds to the transformation $v\rightarrow \sum (v^i)^*e_i^{\dag}$ associated to a unitary metric $g_{ij}=h_{ij}$ \cite{Schom}.

The other operation is \textbf{contraction}. Recall that the standard operation of contraction defined for tensors fields shrinks an $(r,s)$ tensor to an
$(r-1,s-1)$ tensor. For example, for $(1,1)$  tensor fields on a manifold $M$, the contraction shrinks into functions.  It is defined as the
$\mathfrak{F}(M)$-linear map
\begin{equation}
\mathbf{C}: \mathfrak{T}^1_1\rightarrow \mathfrak{F}(M), \qquad \mathbf{C}(X\otimes \theta)=\langle \theta, X\rangle \nonumber,
\end{equation}
for all one-forms $\theta$ and vector fields $X$. In the local chart $(U,x^1\cdots x^n)$ we must have $\mathbf{C}(dx^i\otimes\partial_j)=\delta^i_j$. So for a tensor field $A \in \mathfrak{T}^1_1$, we obtain
\begin{equation}
\mathbf{C}\left(\sum A_{\;\;i}^j\partial_j \otimes dx^i \right)=\sum A^i_{\;\;i}=A_{\;\;i}^i. \nonumber
\end{equation}
Einstein summation rule has been employed.

In analogy with this operation on tensors, we define a similar operation of \textbf{contraction in the algebra} $\mathrm{G}_n$. For example, for the elements
$A_i^j ({}^{i}\!P_{\!j})$ we define:
\begin{align}
\mathbf{C}[A_i^{\;\;j} ({}^{i}\!P_{\!j})]&=A_i^{\;\;j}\mathbf{C}[{}^{i}\!P_{\!j}]:=A_i^{\;\;j}(P_{\!j})({}^{i}\!P) \nonumber \\
&= A_i^{\;\;j} \delta^i_{j}(P)=A^{\;\;i}_{i}(P)\label{contr}
\end{align}
where the property (\ref{d1a}) and the idempotency of $(P)$ have been used. We extend the contraction operation to the basis elements of $\mathrm{G}_n$  as follows:
\begin{equation}
\mathbf{C}_{(p)}\left[{}^{J}\!P_{\overleftarrow{K}}\right]:= \Delta^{J}_{K}(P) \label{Contr}
\end{equation}
with $|J|=|K|=p$.
This full contraction shrinks the elements ${}^{J}\!P_{\overleftarrow{K}}$  to products $s(P)$ of the projectors $P$ in the subspace of the scalars in
$\mathrm{G}_n$ with~$s$ an element of~$\mathbb{F}$.

\end{document}